\documentclass[a4paper,UKenglish,cleveref,autoref,numberwithinsect]{lipics-v2021}

\hypersetup{
  colorlinks=true,
  linkcolor=blue,
  citecolor=blue,
  urlcolor=blue
}

\scrollmode
\usepackage{amsmath}
\usepackage{amsfonts}
\usepackage{amssymb}
\usepackage{latexsym}
\usepackage{stmaryrd}
\usepackage{array}
\usepackage{exscale}
\newcommand{\nc}{\newcommand}
\newcommand{\ol}{\overline}
\newcommand{\es}{\emptyset}
\newcommand{\sm}{\setminus}

\newcommand{\vp}{\varphi}

\newcommand{\bc}{\bigcup}

\newcommand{\Lra}{\Leftrightarrow}

\newcommand{\Ra}{\Rightarrow}

\newcommand{\ra}{\rightarrow}

\newcommand{\lra}{\leftrightarrow}

\newcommand{\sse}{\subseteq}

\newcommand{\fa}{\forall}

\newcommand{\mr}{\mathrm}
\newcommand{\mc}{\mathcal}
\newcommand{\mf}{\mathfrak}

\newcommand{\DMO}{\DeclareMathOperator}
\newcommand{\NN}{\mathbb{N}}
\newcommand{\NNZ}{\NN_0}
\newcommand{\ZZ}{\mathbb{Z}}

\mathchardef\breakingcomma\mathcode`\,
{\catcode`,=\active
  \gdef,{\breakingcomma\discretionary{}{}{}}
}

\newcommand{\mb}{{\:|\:}} 
\newcommand{\set}[1]{\{ #1 \}}

\nc{\simlvi}[1]{\!\sim_{#1}}
\newcommand{\tb}[2]{\set{#1, \dots, #2}} 
\providecommand{\abs}[1]{\lvert #1 \rvert} 
\newcommand{\trans}[1]{#1^{\hspace{0.05em}\mr{t}}} 
\DeclareMathOperator{\length}{lgth} 
\newcommand{\Va}{\mc{V\hspace{-0.1em}A}}

\newcommand{\Lit}{\mc{LIT}}
\newcommand{\Cl}{\mc{CL}}
\newcommand{\Cls}{\mc{CLS}}

\newcommand{\Pcls}[1]{#1\mbox{--}\Cls}

\newcommand{\Sat}{\mc{SAT}}

\newcommand{\Usat}{\mc{USAT}}

\newcommand{\Musat}{\mc{M\hspace{0.8pt}U}} 
\newcommand{\Musati}[1]{\Musat_{\!#1}} 
\newcommand{\Smusat}{\mc{S}\Musat} 
\newcommand{\Smusati}[1]{\Smusat_{\!#1}}

\nc{\Clsoo}{\Cls^{1,1}} 
\DeclareMathOperator{\lit}{lit}
\DeclareMathOperator{\var}{var}

\DMO{\dos}{ds} 
\DMO{\mdos}{mds} 
\newcommand{\Clash}{\mc{HIT}} 

\newcommand{\Uclash}{\mc{U}\Clash} 
\newcommand{\Uclashi}[1]{\Uclash_{\!\!#1}}
\DMO{\premr}{ax} 
\DMO{\concr}{C} 
\DMO{\allcr}{cl} 

\DMO{\thardness}{thd} 
\DMO{\phardness}{phd} 
\DMO{\whardness}{awid} 
\DMO{\dep}{dep} 
\DMO{\hts}{hs} 
\DMO{\semspace}{css} 
\DMO{\resspace}{crs} 
\DMO{\treespace}{cts} 
\nc{\bth}[1]{\langle{#1}\rangle} 
\DMO{\rsub}{r_S} 
\DMO{\rk}{r} 
\DMO{\ro}{\rk_1} 
\DMO{\rki}{\rk_{\infty}} 
\DMO{\rpl}{r^{pl}} 
\DMO{\ropl}{\rk_1^{pl}} 
\nc{\rslur}{\xrightarrow{\text{SLUR}}} 
\nc{\rslurs}{\rslur_{\!*}} 
\DMO{\slur}{slur} 
\nc{\Slur}{\mc{SLUR}} 
\nc{\rkslur}[1]{\xrightarrow{\text{SLUR}_{#1}}} 
\nc{\rkslurs}[1]{\rkslur{#1}_{\!*}} 
\nc{\Altsluri}[1]{\Slur(#1)}
\nc{\Altslurstari}[1]{\Slur\text{\textasteriskcentered}(#1)}
\nc{\Canoni}[1]{\mr{CANON}(#1)}
\nc{\rkslurstar}[1]{\xrightarrow{\text{SLUR\textasteriskcentered}#1}} 
\nc{\rkslursstar}[1]{\rkslurstar{#1}_{\!*}} 
\DMO{\slurstar}{\slur\!\text{\textasteriskcentered}}
\nc{\Urefc}{\mc{UC}}
\nc{\Propc}{\mc{PC}}
\nc{\Wrefc}{\mc{WC}} 
\DeclareMathOperator{\mus}{MU}

\DeclareMathOperator{\ldeg}{ld} 
\DeclareMathOperator{\vdeg}{vd} 
\DeclareMathOperator{\minvdeg}{\mu\!\vdeg} 
\DMO{\varmvd}{\var_{\minvdeg}} 
\DMO{\nfc}{fc} 
\DMO{\maxnfc}{\nu\!\nfc} 
\nc{\Dt}[1]{\mc{F}_{#1}} 

\nc{\svbf}{\mc{VB}} 
\nc{\svbfs}{\mc{VB}^*} 
\DMO{\potp}{pp} 
\DMO{\potprec}{NM} 
\DMO{\minnonmer}{VDM} 
\DMO{\minnonmerh}{VDH} 
\DMO{\maxsmar}{FCM} 
\DMO{\maxsmarh}{FCH} 
\DMO{\varsing}{\var_s} 
\DMO{\varosing}{\var_{1s}} 
\DMO{\varnosing}{\var_{\neg1s}} 
\DMO{\nsv}{\mathit{n}_s} 
\DMO{\nosv}{\mathit{n}_{1s}}
\DMO{\nnosv}{\mathit{n}_{\neg1s}}
\nc{\Musatns}{\Musat'} 
\nc{\Musatnsi}[1]{\Musati{#1}'}
\nc{\Smusatns}{\Smusat'} 
\nc{\Smusatnsi}[1]{\Smusati{#1}'}
\nc{\Uclashns}{\Uclash'} 
\nc{\Uclashnsi}[1]{\Uclashi{#1}'}
\nc{\tsdp}{\xrightarrow{\text{sDP}}}
\nc{\tsdps}{\tsdp_{\!*}}
\nc{\tosdp}{\xrightarrow{\text{1sDP}}}
\nc{\tosdps}{\tosdp_{\!*}}
\DMO{\sdp}{sDP} 
\DMO{\osdp}{sDP_1} 
\nc{\cflmusat}{\mc{CF}\Musat} 
\nc{\cflmusati}[1]{\mc{CF}\Musati{#1}}
\nc{\cflimusat}{\mc{CFI}\Musat} 
\DMO{\sNF}{sNF} 
\DMO{\eqp}{eqp} 
\DMO{\sgp}{sp} 
\DMO{\singind}{si} 
\DMO{\osingind}{si_1} 
\DMO{\shyp}{svh} 
\DMO{\sdph}{ssh} 
\DMO{\msdph}{mss} 
\DMO{\osdph}{ssh_1} 
\DMO{\mosdph}{mss_1} 
\DMO{\mps}{mps} 
\DMO{\purec}{puc} 
\DMO{\doping}{D}
\nc{\glue}[4]{\operatorname{glue}((#1,#2), (#3,#4))} 
\nc{\gluea}[3]{#1 \mathbin{\boxplus}_{#3} #2} 
\DMO{\saturate}{S}
\DMO{\marginalise}{M}
\usepackage[all]{xy}
\usepackage{cancel}
\usepackage{xr}

\usepackage{algorithm}
\usepackage[commentColor=black]{algpseudocodex}

\newcommand{\Bpt}[1]{\mc{B}_{#1}}
\newcommand{\Hp}{\operatorname{U}^2}
\newcommand{\Hl}{\operatorname{U}^1}
\newcommand{\Shl}{\operatorname{U}^0}
\newcommand{\Ub}{\operatorname{U}^0}

\newcommand{\FI}{\mf{F}_I}
\newcommand{\FII}{\mf{F}_{II}}
\newcommand{\FIII}{\mf{F}_{III}}
\newcommand{\FIV}{\mf{F}_{IV}}

\newcommand{\Bmusat}{2\mbox{--}\!\Musat}

\DeclareMathOperator{\idg}{idg}
\newcommand{\Bclss}{\Pcls{2}^*}

\newcommand{\Dipcmp}{\mc{DPC}}
\newcommand{\Dicmp}{\mc{DC}}

\DeclareMathOperator{\tFC}{FC}

\DeclareMathOperator{\pfirst}{first}
\DeclareMathOperator{\plast}{last}
\newcommand{\reach}[3]{#1 \xrightarrow{#3} #2}

\DeclareMathOperator{\linord}{\mc{LO}}

\DeclareMathOperator{\enum}{enum}

\newtheorem{question}[theorem]{Question}

\title{Simple minimally unsatisfiable subsets of 2-CNFs}

\titlerunning{MUSs of 2-CNFs}

\author{Oliver Kullmann\footnote{corresponding author}}{Department of Computer Science, Swansea University, United Kingdom}{O.Kullmann@Swansea.ac.uk}{https://orcid.org/0000-0003-3021-0095}{}

\author{Edward Clewer}{Independent researcher, United Kingdom}{edward.clewer.research@gmail.com}{}{}

\authorrunning{O. Kullmann and E.Clewer}

\Copyright{Oliver Kullmann and Edward Clewer} 

\ccsdesc{Theory of computation~Logic~Automated reasoning}
\ccsdesc{Theory of computation~Design and analysis of algorithms~Enumeration}

\keywords{minimal unsatisfiability, MUS, 2-CNF, polytime}

\EventLogo{} 

\nolinenumbers 

\begin{document}


\maketitle

\begin{abstract}
  We present a study of minimal unsatisfiable subsets (MUSs) of 2-CNF Boolean formulas, building on the Abbasizanjani-Kullmann classification of minimally unsatisfiable 2-CNFs (2-MUs).
  First we give a linear-time procedure for recognising 2-MUs.
  Then we study the problem of finding one simple MUS.
  On the one hand we extend the results by Kleine B\"uning et al, which showed NP-completeness of the decision, whether a deficiency-1 MUS exists.
  On the other hand we show that deciding/finding an MUS containing one or two unit-clauses (which are special deficiency-1 MUSs) can be done in polynomial time.
  Finally we present an incremental polynomial time algorithm for some special type of MUSs, namely those MUSs containing at least one unit-clause.

  We conclude by discussing the main open problem, developing a deeper understanding of the landscape of easy/hard MUSs of 2-CNFs.
\end{abstract}

\section{Introduction}
\label{sec:introduction}

An unsatisfiable CNF formula in general contains many ``sources'' of unsatisfiability, and the main approach is to consider minimally unsatisfiable subsets (MUSs for short).
MUSs serve as root causes of infeasibility, and find applications in diagnosis, error localisation, product configuration, model checking (CEGAR), and redundancy removal.

Considering complete enumeration of all MUSs, an early paper is \cite{LiffitonSakallah2005AllMUS}.
An important milestone was given by \cite{LPMMS2016MUS}, which captures much of the progress in the field, and presents the MARCO framework.
A general tool set is provided in \cite{BendikCerna2020MUST}, while a probabilistic approach to approximate counting is introduced in \cite{bendik-meel-2021}.

Since we are focusing on the special input class 2-CNF, enumeration of MUSs for special classes is of importance to us.
It appears the only work available here is on Horn CNFs (at most one positive literal in each clause).
\cite[Theorem 5]{PenalozaSertkaya2010AxiomPinpointing} shows MUS enumeration with polynomial delay for Horn CNFs.
They also make a remark on a special case of Horn CNFs, which is worth discussing here.
Namely they consider ``core KBs'', which are 2-CNFs with only mixed clauses, containing one positive and one negative literal (which are always satisfiable).
On \cite[Page 282]{PenalozaSertkaya2010AxiomPinpointing} they make the remark, that core KBs allow finding minimal axiom sets with polynomial delay, by enumerating paths (see \cite{PengEtal2019PathEnumeration} for a recent paper on enumerating paths).
Our situation is more complex, since we are considering general 2-CNFs, where one has to consider the problem of complementary literals occurring on a path (``contradictory paths'').
So even in the simplest case of fixing two unit-clauses (the situation we call ``Family I''), a more elaborate algorithm for finding one MUS is needed, and enumeration with polynomial delay is not known.
And if we relax the condition, and allow more general forms of MUSs (``Family II'' in our context), then the problem of duplicated enumeration of the same MUS (by different paths) arises, so that we can only guarantee enumeration with incremental polynomial time.
\cite{ArifMenciaMarquesSilva2015MusHorn} considers Horn CNFs from the practical side.

Back to general CNF, a different framework is to search for (single) ``simple'' MUSs, where the standard interpretation is ``short(est)''.
An early paper here is \cite{OhMneimnehAndrausSakallahMarkov2004Amuse}, while a recent article presenting also a general framework is \cite{DBLP:journals/jair/GambaBG23}.

We aim at considering more intelligent complexity measures than sheer size --- an MUS might be structurally very simple, but nonetheless very long.
Based on the complexity measure ``deficiency'' for MUs (minimally unsatisfiable CNFs), we consider the simplest MUSs in this respect, which have deficiency one (that is, exactly one more clauses than variables).
Moreover we are aiming at polytime results, and thus we focus on MUSs of deficiency one of 2-CNFs.
We remark here that the complexity of enumerating arbitrary MUSs of 2-CNFs is wide open --- and the task of enumerating a special class of MUSs might be easier or harder than enumerating the general class.

So we are combining two strands of research here, concentrating on deficiency one together with 2-CNF.
A common abbreviation for MUs of deficiency one is MU(1).
\cite[Theorem 14]{DDK98} showed that whether an input CNF is in MU(1) can be decided in quadratic time (this has been extended to polytime for MU(k) in \cite{FKS00}).
A complete classification of 2-CNF-MUs, called 2-MUs, has been achieved in \cite{AbbasizanjaniKullmann20202MUa}, based on \cite{AbbasizanjaniKullmann2016MUconference}.
We only need the simplest case of the classification, the intersection of 2-MU and MU(1).
Such MUs are of four types, called ``Families I-IV'' in this paper.\footnote{In \cite{AbbasizanjaniKullmann20202MUa} the four families are only referred to by their technical notations.
  Very short after submission of this paper to the SAT2026, a much extended version of \cite{AbbasizanjaniKullmann20202MUa} will be made available on arXiv.
However what we need in this paper is completely covered in Section 4 and Appendix C of \cite{AbbasizanjaniKullmann20202MUa}.}

Before coming to our results, we mention the notion of ``digraphs with given skew-symmetry'', which can be considered as the study of 2-CNFs from a graph-theoretical angle.\footnote{This topic will be more extensively discussed in the upcoming extension of \cite{AbbasizanjaniKullmann20202MUa}.}
The fundamental paper here is \cite{GoldbergKarzanov1996SkewSymmetry}.
A ``skew-symmetry'' is like having literals as vertices, where complementation is an anti-automorphism, that is, arcs $a \ra b$ are mapped to their contraposition $\ol b \ra \ol a$.
\cite{GoldbergKarzanov1996SkewSymmetry} introduces arc-regular paths (not having some arc and its contraposition on the path), and show that the basic problems regarding such paths (including finding shortest ones) can be solved in linear time.
We need to consider vertex-regular paths (not having a literal and its complement on the path).
It is common knowledge that the question of vertex-regular paths can be reduced to it, by splitting every vertex into two (for the incoming and outgoing arcs, with one new arc connecting the two new vertices).

Finally our results:
\begin{enumerate}
\item \cref{sec:dec2mu} considers the decision problem for 2-MU.
  The basic algorithm (exploiting 2-SAT decision in linear time) takes quadratic time, and we give a linear time algorithm in \cref{thm:2MUlintime}, based on a general result on ``checked singular DP-reduction'' in \cref{thm:boundeddegree}.
\item \cref{sec:def1MUSNPcompl} presents a refined hardness analysis for MUSs of deficiency one, pinpointing more precisely the cause of complexity with the Families III and IV in Theorems \ref{thm:gendef1NPC}, \ref{thm:gendef1NPCvariant}, based on the structural results of \cref{thm:basicharactFG}.
\item \cref{Sec:def1MUSsimple} shows how to find one MUS containing some unit-clause.
\cref{thm:twounits} characterises the case with two unit-clauses, while
\cref{thm:oneunit} characterises the cases with at least one unit-clause, and \cref{cor:oneunit2} shows that finding an MUS with at least one unit-clause (i.e., belonging to Family I or II) can be done in quadratic time.
\item Finally \cref{Sec:def1MUSpolydelay} discusses the enumeration of MUS with at least one unit-clause.
For the case with given unit-clause, \cref{thm:correctnessenumwithunit} shows the correctness of the algorithm, while \cref{thm:runtimeenumwith} presents its complexity.
Similarly, \cref{thm:correctnessenumwithoutunit} shows correctness of the algorithm for handling all unit-clauses, while \cref{thm:runtimeenumwithout} presents its complexity.
\end{enumerate}

\section{Preliminaries}
\label{sec:Preliminaries}

We use $\NNZ = \set{0,1,2,\dots}$.
When stating complexity results using Big-Oh, if the term becomes zero, then this is interpreted as linear time (the minimum runtime for ordinary algorithms, in the usual random-access model).

\subsection{Clause-sets}
\label{sec:prelimcls}

Let $\Va$ be the set of variables, and $\Lit = \Va \cup \set{\ol v : v \in \Va}$ be the set of literals, where for the complementation holds $\ol{\ol x} = x$ for all $x \in \Lit$.
For $L \sse \Lit$ we use $\ol L := \set{\ol x : x \in L}$.
$L$ contains a clash iff $L \cap \ol L \ne \es$ (otherwise it is clashfree).

Now a clause $C$ is a finite $C \sse \Lit$ with $C \cap \ol C = \es$ ($C$ does not contain a clash/conflict), and the set of all clauses is $\Cl$.
Finally a clause-set is a finite subset of $\Cl$, and the set of all clause-sets is $\Cls$.
A special clause is the empty clause $\bot := \es \in \Cl$, and a special clause-set is the empty clause-set $\top := \es \in \Cls$.
For $k \in \NNZ$ we define $\Pcls k := \set{F \in \Cls \mb \fa\, C \in F : \abs C \le k}$ (all clauses have length (or ``width'') at most $k$).
Especially we use $\Bclss := \set{F \in \Pcls 2 : \bot \notin F}$.

Via $\var: \Lit \ra \Va$ we obtain the underlying variable of a literal, while for clauses $C$ we use $\var(C) := \set{\var(x) : x \in C} \sse \Va$, and $\var(F) := \bc_{C \in F} \var(C) \sse \Va$ for clause-sets $F$.
With $\lit(F) := \var(F) \cup \ol{\var(F)}$ we obtain the literal-set of $F$; so $x \in \lit(F) \Lra \set{x,\ol{x}} \cap \bc_{C \in F} C \ne \es$.
The degree of a literal $x$ in $F$ is $\ldeg_F(x) := \abs{\set{C \in F : x \in C}} \in \NNZ$, and the degree of a variable $v$ is $\vdeg_F(v) := \ldeg_F(v) + \ldeg_F(\ol v)$.
For $k \in \NNZ$ let $\var_k(F) := \set{v \in \var(F) : \vdeg_F(v) = k}$ be the set of degree-$k$-variables, and let $n_k(F) := \abs{\var_k(F)} \in \NNZ$ be their number.
Similarly we use $\lit_k(F) := \set{x \in \lit(F) : \ldeg_F(x) = k}$ and $n'_k(F) := \abs{\lit_k(F)} \in \NNZ$.
For $F \in \Cls$ a variable $v \in \var(F)$ is called \textbf{singular} if $\ldeg_F(v) = 1$ or $\ldeg_F(\ol v) = 1$, the set of all singular variables is $\varsing(F) = \set{v \in \var(F) : \set{v,\ol v} \cap \lit_1(F) \ne \es}$.
While $v$ is called \textbf{1-singular} if $\ldeg_F(v) = 1$ and $\ldeg_F(\ol v) = 1$, and the set of all 1-singular variables is $\varosing(F) = \set{v \in \var(F) : \set{v,\ol v} \sse \lit_1(F)}$.

As measures for $F \in \Cls$ we use:
\begin{itemize}
\item $n(F) := \abs{\var(F)} \in \NNZ$ (number of variables)
\item $c(F) := \abs{F} \in \NNZ$ (number of clauses)
\item $u(F) := \abs{\set{C \in F : \abs C = 1}} \in \NNZ$ (number of unit-clauses)
\item $\ell(F) := \sum_{C \in F} \abs{C} \in \NNZ$ (our measure of formula size).
\item $\delta(F) := c(F) - n(F) \in \ZZ$ (deficiency).
\end{itemize}
Concerning the trivial cases, we have $n(F) = 0 \Lra F \in \set{\top, \set{\bot}}$ and $c(F) = 0 \Lra \ell(F) = 0 \Lra F = \top$.
The only non-standard measure is $u(F)$, the number of unit-clauses.

The semantics is given by partial assignments, which are maps $\vp: V \ra \set{0,1}$, where $V \sse \Va$.
For $v \in V$ we define $\vp(\ol v) = \ol{\vp(v)}$, and thus $\vp$ is defined on $V \cup \ol V$.
A clause $C$ is satisfied by $\vp$ if there is a literal $x \in C$ with $\vp(x) = 1$, while a clause-set is satisfied by $\vp$ if all clauses $C \in F$ are satisfied by $\vp$.
The set of all satisfiable clause-sets is denoted by $\Sat \sse \Cls$, with $\top \in \Sat$, while $\Usat := \Cls \sm \Sat$ is the set of all unsatisfiable clause-sets.
A clause-set $F$ is minimally unsatisfiable (is an MU), if it is unsatisfiable, but elimination of any clause renders it satisfiable; the set of all MUs is defined as $\Musat := \set{F \in \Usat \mb \fa\, C \in F : F \sm \set{C} \in \Sat}$, with $\set{\bot} \in \Musat$.
And $\Bmusat := \Musat \cap \Pcls 2$.
We also use $\Musati{\delta=1} := \set{F \in \Musat : \delta(F) = 1}$.
Recall $\delta(F) \ge 1$ for all $F \in \Musat$; see \cite{Kullmann2007HandbuchMU2021} for more background information.
Another (trivial) fact is $\varosing(F) = \var_2(F)$ for $F \in \Musat$.

Finally an MUS of $F \in \Cls$ is an MU which is a subset of $F$, and the set of all MUSs of $F$ is $\mus(F) := \set{F' \sse F : F' \in \Musat}$.
We use $\mus_{C_1,\dots,C_m}(F) := \set{F' \in \mus(F) : C_1, \dots, C_m \in F'}$.

\subsection{Digraphs}
\label{sec:prelimgraphs}

Digraphs $G$ have finite vertex-sets $V(G)$, while the arcs are pairs of different vertices: $E(G) \sse \set{(v,w) \in V(G)^2 : v \ne w}$.
By $\trans G$ we denote the digraph obtained from $G$ by reversing the direction of the arcs.
A subdigraph $G'$ of $G$ is a digraph with $V(G') \sse V(G)$ and $E(G') \sse E(G)$.
A path $P$ in a digraph $G$ is a subdigraph with $V(P) \ne \es$, its length is denoted by $\length(P) := \abs{E(P)} \ge 0$ (the number of arcs; $\length(P) = \abs{V(P)} - 1$).
The first resp.\ last vertex in a path $P$ are denoted by $\pfirst(P), \plast(P) \in V(P)$.
The concatenation of paths $P, P'$ is written as $P;P'$, provided $\plast(P) = \pfirst(P')$ and $V(P) \cap V(P') = \set{\plast(P)}$.
As a special case, appending a vertex $v$ to a path $P$ (with $v \notin V(P)$) is denoted by $P;v$.
By $G - A$ for $A$ some set we denote the subdigraph of $G$ obtained by removing those vertices $v \in V(G)$ and their incident arcs from $G$ for which we have $v \in A$.
Finally the relation $\reach{x}{y}{G}$ is true iff there is a path in $G$ from $x$ to $y$.

\subsection{Digraphs with complementation}
\label{sec:prelimgraphscompl}

In the literature we have the notion of digraphs with given skew-symmetry, which we handle by using literals as vertices, and realising the skew-symmetry by complementation.

\begin{definition}\label{def:digraphpartcompl}
  A \textbf{digraph with partial complementation} is a digraph $G$ whose vertices are literals, i.e., $V(G) \sse \Lit$.
  The set of all digraph with partial complementation is denoted by $\Dipcmp$.
  A \textbf{digraph with complementation} is a digraph with partial complementation, such that the vertex-set is closed under complementation (i.e., $\ol{V(G)} = V(G)$), and such that the arc-set is closed under contraposition, that is, for all $(x,y) \in E(G)$ we have $\ol{(x,y)} := (\ol y, \ol x) \in E(G)$.
  The set of all digraph with complementation is denoted by $\Dicmp$.
\end{definition}

For a set $A \sse E(G)$ of arcs in $G \in \Dipcmp$, by
\begin{displaymath}
  \Cl(A) := \set{\set{\ol x, y} : (x,y) \in A} \in \Cls
\end{displaymath}
we denote the corresponding clause-set; so $\Cl(E(\idg(F))) = F$.
For a single arc $e$, by $\Cl(e)$ we denote the single element of $\Cl(\set{e})$.
The clause $\Cl(e) = \set{\ol x, y}$ for $e = (x,y)$ is a unit-clause iff $y = \ol x$, and then the arc $e = (x,y)$ is called a \textbf{unit}.

\begin{definition}\label{def:regular}
  Consider $G \in \Dipcmp$.
  A path $P$ is \textbf{regular} (more precisely ``vertex-regular'', but we do not consider arc-regular paths in this paper) if $V(P)$ is clashfree.
  We use $\mf R_{x,y}(G)$ for the set of regular paths (as subdigraphs) from $x$ to $y$.
\end{definition}

\begin{definition}\label{def:nearlyreg}
  Consider $G \in \Dipcmp$.
  A path $P$ with exactly one clash, such that $P$ without the last vertex is a regular path, is called \textbf{nearly regular}.
  For a nearly regular path $P$ there is a unique \textbf{regular decomposition} $P = P_0;P_1$, where $P_0$ is a regular path or the empty digraph, while $P_1$ is a path with $\plast(P_1) = \ol{\pfirst(P_1)}$.
  We use $\mf U_x(G)$ for the set of nearly regular paths $P$ in $\idg(F)$ starting in $x$ (``U'' stands for ``unique clash'').
\end{definition}

For a path $P$ with $\length(P) \ge 1$ in a digraph with complementation, the \textbf{contraposition $\ol P$}, obtained by contraposing all arcs and reverting their order, is a path from $\ol{\plast(P)}$ to $\ol{\pfirst(P)}$.

The fundamental example of a digraph with complementation is given by the \textbf{implication digraph} $\idg: \Bclss \ra \Dicmp$, where $\idg(F)$ has vertex-set $V(\idg(F)) = \lit(F)$ and for every clause $\set{x,y} \in F$ the arcs $(\ol x, y), (\ol y, x) \in E(\idg(F))$.
So for binary clauses we get two arcs, while for a unit-clause $\set{x}$ we just get one arc, namely $(\ol x, x)$.
Digraphs with \emph{partial complementation} are obtained from the implication digraphs by forming subdigraphs in various forms.

\section{Review of the classification of 2-MU}
\label{sec:review2MU}

For $F \in \Bmusat$ we have the following basic facts, which we will use freely in the remainder of the paper:
\begin{itemize}
\item $\lit(F) = \lit_1(F) \cup \lit_2(F)$ and $\var(F) = \var_2(F) \cup \var_3(F) \cup \var_4(F)$ (\cite{KBZ2002b}).
\item It follows $n'_1(F) + n'_2(F) = 2 n(F)$, $n_2(F) + n_3(F) + n_4(F) = n(F)$, and $n'_1(F) = 2 n_2(F) + n_3(F)$, $n'_2(F) = n_3(F) + 2 n_4(F)$.
  And also $\varsing(F) = \var_2(F) \cup \var_3(F)$.
\item $0 \le u(F) \le 2$ and $u(F) \ne 0 \Ra \delta(F) = 1$ (\cite[Lemma 8]{Liberatore2008Redundanz}).
\end{itemize}

The nonsingular elements of $\Bmusat$ for deficiency $k \ge 1$ are
\begin{description}
\item[$k=1$] $\set{\bot}$
\item[$k\ge 2$] $\Bpt k \doteq v_1 \lra v_2 \lra v_3 \lra \dots \lra v_{k-1} \lra v_k \lra \neg v_1$, with $k$ variables and $2 k$ clauses, by \cite{KBZ2002b} and \cite[Corollary 2]{AbbasizanjaniKullmann2016MUconference}.
\end{description}

\subsection{The Four Families}
\label{sec:fourfamilies}

The report \cite{AbbasizanjaniKullmann20202MUa} classifies effectively the isomorphism types of $\Bmusat$, where two clause-sets $F, F'$ are isomorphic iff $F$ can be transformed into $F'$ by a renaming of the variables (with $n(F) !$ many choices) and by a flipping of literals (with $2^{n(F)}$ many choices).
In this paper, we only need the deficiency-1-case, and only a broad classification, which we now review.

The four Families I-IV, written as classes $\FI, \FII, \FIII, \FIV$, partition the deficiency-1-subclass of $\Bmusat$.
In \cite{AbbasizanjaniKullmann20202MUa} the families are given via representatives as four families $\Hp_n, \Hl_{n,i}, \Shl_{n,i}, \Ub_{n,x,y}$, with concrete parameters, in order to pinpoint the isomorphism types, but in this paper we concentrate on the closure under isomorphism of these families, yielding the above classes.
We give now the main characteristics of the four families, together with the generic formulas (possibly with subcases) for the elements.
For the representations of the formulas we use the notation ``$x \ra \dots \ra y$'' for a chain of implications of the form $a \ra b$, where $\var(a) \ne \var(b)$ (thus the corresponding clause $\set{\ol a, b}$ has length two), and where furthermore the internal variables (other than those of the endpoints $x, y$, i.e., those variables hidden by the dots) are different from each other, and also from those of other internal variables in such chains (in the same formula).
The literals $x, y \in \Lit$ are arbitrary, but fulfil $\var(x) \ne \var(y)$, and $F \in \Bmusat$ (we do not need to restrict to $\delta(F) = 1$, but this is always implied by the conditions):
\begin{enumerate}
\item[I] $F \in \FI \Lra n_3(F) = 0 \land n_4(F) = 0 \Lra u(F) = 2$.
  \begin{enumerate}
  \item[Ia] $\boxed{\set{x}, \set{\ol x}}$
  \item[Ib] $\boxed{\set{x}, \set{y}}$ + $\boxed{x \ra \dots \ra \ol y}$
  \end{enumerate}
\item[II] $F \in \FII \Lra n_3(F) = 1 \Lra n_3(F) = 1 \land n_4(F) = 0 \Lra u(F) = 1$.
  \begin{enumerate}
  \item[IIa] $\boxed{\set{x}}$ + $\boxed{x \ra \dots \ra \ol x}$ ($\var_3(F) = \set{\var(x)}$)
  \item[IIb] $\boxed{\set{x}}$ + $\boxed{x \ra \dots \ra y \ra \dots \ra \ol y}$ ($\var_3(F) = \set{\var(y)}$)
  \end{enumerate}
\item[III] $F \in \FIII \Lra n_3(F) = 0 \land n_4(F) = 1$.\\
  $\boxed{x \ra \dots \ra \ol x \ra \dots \ra x}$ ($\var_4(F) = \set{\var(x)}$)
\item[IV] $F \in \FIV \Lra n_3(F) = 2 \land n_4(F) = 0$.\\
  $\boxed{x \ra \dots \ra \ol x \ra \dots \ra y \ra \dots \ra \ol y}$ ($\var_3(F) = \set{\var(x), \var(y)}$)
\end{enumerate}

\section{Deciding 2-MU}
\label{sec:dec2mu}

Our first result is that 2-MU can be decided in linear time (while the trivial algorithm, using SAT-decision for 2-CNF in linear time, needs quadratic time).
Our main tool is \emph{checked singular DP-reduction}, and the algorithm for deciding 2-MU runs checked singular DP-reduction, and if the reduction was successful, checks the final result (depending on the deficiency $1$ or $\ge 2$).
Singular DP-reduction has been applied in the literature to reduce MUs, see for example \cite{KullmannZhao2012ConfluenceJ} for a fundamental study.
\cite[Lemma 9]{KullmannZhao2012ConfluenceJ} indeed spells out the conditions such that we can check for MU, and this is our basis for ``checked'' sDP-reduction (which may fail):
\begin{definition}\label{def:checkedsDP}
  Recall that for $F \in \Cls$ a variable $v \in \var(F)$ is called singular if $\ldeg_F(v) = 1$ or $\ldeg_F(\ol v) = 1$.
  A \textbf{main clause} is the clause $C \in F$ containing the chosen literal $v$ resp.\ $\ol v$ occurring only once (there is a choice iff $\ldeg_F(v) = \ldeg_F(\ol v) = 1$), while the \textbf{side clauses} $D_1, \dots, D_m \in F$ for $m \ge 0$ are the clauses (without duplication) containing the complemented literal $\ol v$ resp.\ $v$.
  The \textbf{sDP-check} for $v, F$ \textbf{fails} if one of the following conditions holds:
  \begin{enumerate}
  \item[(i)] $m = 0$;
  \item[(ii)] there is $i \in \tb 1m$ such that $C$ and $D_i$ clash in a variable other than $v$;
  \item[(iii)] there are $i, j \in \tb 1m$, $i \ne j$, such that $D_i \sm C = D_j \sm C$;
  \item[(iv)] there is $i \in \tb 1m$ and $E \in F$ with $v \notin \var(E)$ and $D_i \sm C = E \sm C$.\footnote{This condition is slightly stronger than Condition 3(c) in \cite[Lemma 9]{KullmannZhao2012ConfluenceJ}:
      There we check whether the resolvent of $C$ with $D_i$ equals an existing clause $E$, while here we check whether an existing clause $E$ subsumes the resolvent.
      This is done here to maintain a consistent style for the conditions.}
  \end{enumerate}
  For $F \in \Cls$ and $v \in \Va$ a \textbf{checked sDP-reduction step} (csDP-reduction) results in FAIL, if the sDP-check fails, and otherwise results in
  \begin{displaymath}
    F' := (F \sm \set{C,D_1,\dots,D_m}) \cup \set{(C \cup D_i) \sm \set{v, \ol v} : i \in \tb 1m} \in \Cls.
  \end{displaymath}
  A \textbf{full csDP-reduction} for $F \in \Cls$ applies csDP-reduction (non-deterministically) as long as either a failure is reached, in which case FAIL is the result, or otherwise until no singular variable exists anymore, in which case the result is the final $F' \in \Cls$.
\end{definition}
The following fundamental property of csDP-reduction follows directly from \cite[Lemma 9]{KullmannZhao2012ConfluenceJ} (and is also easy to check directly):
\begin{lemma}\label{lem:csDPprop}
  For $F \in \Cls$ holds:
  \begin{enumerate}
  \item If full csDP-reduction fails, then $F \notin \Musat$.
  \item Otherwise for the resulting $F'$ holds $F \in \Musat \Lra F' \in \Musat$.
  \end{enumerate}
\end{lemma}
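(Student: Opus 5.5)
It suffices to analyse a single csDP-step. The full statement then follows by induction on the number of steps of the (non-deterministic) full reduction. Concretely, I will show two things for one step at $v$ with main clause $C$ and side clauses $D_1,\dots,D_m$. First, if the sDP-check fails, then $F \notin \Musat$. Second, if it succeeds, then $F \in \Musat \Lra F' \in \Musat$. Given these, induction on the length of the reduction sequence yields both parts. For part 1, a failure occurring at some intermediate $F''$ gives $F'' \notin \Musat$, and the equivalences along the successful earlier steps transfer this back to $F \notin \Musat$. For part 2, the equivalences simply chain.

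For the failure conditions, assume $F \in \Musat$ and derive a contradiction in each case.
\begin{enumerate}
\item[(i)] If $m=0$, then $v$ is pure. Any satisfying assignment of $F\sm\set{C}$, extended so that the literal of $v$ in $C$ is true, satisfies $F$, which is impossible.
\item[(ii)] This uses the standard fact that in an MU, for every clause $D$ there is an assignment satisfying $F \sm \set{D}$ and falsifying exactly $D$. Take such an assignment $\beta$ for $D_i$. Then $\beta$ falsifies $D_i$, so it sets the literal of $v$ in $C$ true. If $C$ and $D_i$ clash in some $w \ne v$, then $\beta$ also makes the literal of $w$ in $C$ true. Flipping $v$ therefore keeps $C$ satisfied and makes $D_i$ true. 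The flip can only falsify clauses containing the main literal, and $C$ is the only such clause and remains satisfied. So $F$ is satisfiable, a contradiction.
\item[(iii)] Take the assignment $\beta$ falsifying exactly $D_i$. It also falsifies $D_j \sm C = D_i \sm C$, and it falsifies the side literal. Since $C$ contains the main literal, which is true, $D_j$ can only be satisfied by a literal of $D_j\cap C$. A literal of $D_j\cap C$ cannot lie in $D_i$, since $D_i$ is falsified. Flipping $v$ keeps $D_j$ satisfied via that literal and satisfies $D_i$. Only $C$ could become false. By (ii) excluded, $C\cup D_i$ is clashfree, and $C\sm\set{\text{main literal}}$ must be true under $\beta$. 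Otherwise $\beta$ would falsify the resolvent, which is implied by $F\sm\set{D_i}$ \dots
\end{enumerate}

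Here I realise the cleaner route is the resolvent argument. Under the hypothesis, $F\sm\set{D_i}$ implies $C \cup D_j$ minus $v$, which is subsumed by $C\cup D_i$ minus $v$. I will organise (iii) and (iv) uniformly as follows. The clause $E$ (respectively $D_j$) together with $C$ entails the resolvent $R_i=(C\cup D_i)\sm\set{v,\ol v}$. Moreover, $R_i$ together with $C$ suffices to replace $D_i$ in any refutation. Hence $F \sm \set{D_i}$ is unsatisfiable: any model of it falsifying $D_i$ makes the side literal false and all of $D_i\sm C$ false, and with $E\sm C = D_i\sm C$ it must satisfy $E$ within $C$, contradiction. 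I expect this case bookkeeping to be the main fiddly step.

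For the successful case, I use that sDP-reduction (DP-elimination of $v$) preserves satisfiability, so $F\in\Usat \Lra F'\in\Usat$. For minimality I verify both directions clause by clause. Removing a resolvent $R_i$ from $F'$ corresponds to removing $D_i$ from $F$. Removing an untouched clause $E$ corresponds to itself; conditions (iii) and (iv) guarantee that resolvents are new and distinct, so this correspondence is a bijection-like map with no collapsed clauses. In each case, satisfying assignments transfer by DP-elimination in one direction and by extension to $v$ in the other. Removing $C$ from $F$ corresponds to removing all resolvents, which follows from minimality at any single $R_i$ via the same extension argument.
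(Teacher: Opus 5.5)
Your argument is correct in substance, but it takes a different route from the paper. The paper gives no proof of its own: it simply invokes \cite[Lemma 9]{KullmannZhao2012ConfluenceJ}, where the conditions under which a singular DP-step preserves and reflects minimal unsatisfiability are spelled out, and remarks that a direct check is easy. You supply that direct check. Take $v$ as the main literal. Purity handles (i), and a single flip observation handles (ii)--(iv): if a model of $F \sm \set{D_i}$ makes some literal of $C \sm \set{v}$ true, then setting $v$ to $0$ yields a model of $F$. For a successful step, two ingredients give both directions of minimality: satisfiability-equivalence of DP-elimination, and the correspondence $R_i \leftrightarrow D_i$, $E \leftrightarrow E$, $C \leftrightarrow \set{R_1,\dots,R_m}$. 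Induction along the run then finishes. The citation is shorter. Your version is self-contained, and it directly covers the paper's strengthened condition (iv), because your failure argument only needs $E \sse R_i$. The cited lemma, by contrast, only tests whether a resolvent \emph{equals} an existing clause.

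Two points in the write-up need repair. First, in (iii)/(iv) the sentence ``$R_i$ together with $C$ suffices to replace $D_i$ in any refutation'' is not right as a general principle, since $\set{C, R_i}$ does not entail $D_i$ in general. What you actually use is this: any model $\beta$ of $F \sm \set{D_i}$ falsifies $D_i$ and satisfies $E \sse R_i$ (resp.\ $D_j$, where $D_j \sm \set{\ol v} \sse R_i$). Hence $\beta$ makes a literal of $C \sm \set{v}$ true, and the flip gives a model of $F$. Say this explicitly rather than just ``contradiction''. Your abandoned first attempt at (iii) was in fact already complete, because the true literal of $D_j \cap C$ lies in $C \sm \set{v}$.

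Second, and more importantly, condition (iv) as printed can never fire, since $\ol v \in D_i \sm C$ but $\ol v \notin E$. Two of your steps rely on the intended reading $E \sm C = D_i \sm (C \cup \set{\ol v})$ (or $E \sse R_i$, as in the footnote): your failure argument, and your claim that (iv) makes the resolvents new. You should state this reading. Under the literal reading the lemma is false: $F = \set{\bot, \set{v}, \set{\ol v}}$ passes the check and reduces to $\set{\bot} \in \Musat$, while $F \notin \Musat$.
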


\begin{question}\label{que:complsDP}
  Computing full checked sDP-reduction on arbitrary clause-sets $F \in \Cls$, or full (unchecked) sDP-reduction on MUs, has an upper time-bound of $O(n(F)^2 \cdot c(F))$, i.e., cubic time:
  Each step reduces the number of variables and the number of clauses by $1$, and updating the literal degrees can be done in linear time, however the growth of the clause-lengths can add $O(n \cdot c)$ many literal occurrences in each step.
  Can this upper bound be improved, or can a lower bound via SETH be shown?
\end{question}

With 2-MU, we have a very special case, where every literal occurs at most twice (and this is maintained by checked sDP-reduction).
We consider the somewhat more general case of (maintained) bounded literal-degrees, and show that here (full) checked sDP-reduction can be performed in linear time:
\begin{theorem}\label{thm:boundeddegree}
  Consider $k \in \NN$ and $\mc C \sse \Pcls k$, such that all literal degrees for $F \in \mc C$ are also bounded by $k$, and $\mc C$ is stable under checked sDP-reduction.
  Then full checked sDP-reduction for $F \in \mc C$ can be performed in linear time.
\end{theorem}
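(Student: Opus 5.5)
The approach is a worklist algorithm. Every step of checked sDP-reduction involves only a constant number of clauses, each of constant length, so the number of steps and the total work per step are both bounded independently of the input size. Throughout, $F \in \mc C$ is fixed. The key observation is that $\mc C$ is stable under csDP-reduction, so every intermediate clause-set again has clause-lengths and literal-degrees at most $k$. Consequently, for a chosen singular variable $v$ the main clause $C$ and the side clauses $D_1,\dots,D_m$ number at most $1+k$, each has length at most $k$, and every resolvent $(C\cup D_i)\sm\set{v,\ol v}$ has length at most $k$ as well.

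The first step is to set up the data structures in time $O(\ell(F))$. For each literal $x \in \lit(F)$ we keep its occurrence list, i.e.\ pointers to the clauses containing $x$, of length at most $k$, together with its degree $\ldeg(x)$. Clauses are stored as sorted arrays of length at most $k$, under any fixed order on literals. We also maintain a worklist (a stack) of the currently singular variables, where each variable carries a flag recording membership, so it is never pushed twice. Since $n(F)$ strictly decreases with each step (the variable $v$ disappears and no new variables are created), there are at most $n(F) \le \ell(F)$ steps.

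The second step is to show that a single step, including its sDP-check, takes $O(1)$ time, with the constant depending only on $k$. We pop a variable $v$ and re-check that it is still present and singular; stale entries are discarded at constant cost each. The clauses $C, D_1,\dots,D_m$ are read off the occurrence lists. Conditions (i)--(iii) each compare $O(k^2)$ pairs of clauses of length at most $k$. Condition (iv) is the one check that seems to need a global search, and we localise it as follows. If $D_i \sm C = E \sm C$ with $v \notin \var(E)$ and $E \neq D_i$, then either $D_i \sm C$ is nonempty, in which case $E$ contains some literal $y$ of $D_i \sm C$ and therefore lies on the occurrence list of $y$, or $D_i \sm C = \es$. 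In the latter case $E \sse C$ with $v \notin \var(E)$, so $E \sse C \sm \set{v}$, and $E$ either contains a literal of $C$ (and is again found on an occurrence list) or $E = \bot$. So we scan the $O(k^2)$ candidate clauses $E$ on the occurrence lists of literals in $C \cup D_i$, and we keep one global flag recording whether $\bot \in F$. This settles (iv) in $O(k^3)$ time. If the check passes, we delete $C, D_1,\dots,D_m$ from the occurrence lists, insert the at most $k$ resolvents, and update the degrees; each of these operations touches $O(k^2)$ literal slots. Every literal whose degree changed belongs to one of the constantly many clauses involved, so we re-examine those $O(k^2)$ variables and push any newly singular ones onto the worklist.

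The third step is accounting. The initial push of the singular variables costs $O(n(F))$, each step pushes $O(k^2)$ variables, and every pop costs $O(1)$, so the total work is $O(\ell(F) + n(F)\cdot k^{3})$, which is linear for fixed $k$. The process terminates when the worklist is empty; at that point no singular variable remains, because every degree change was followed by a re-examination of the affected variables. We expect the main obstacle to be condition (iv) together with duplicate detection: we must ensure that a resolvent coinciding with an existing clause is detected locally, and the occurrence-list argument above handles this. It also requires the stability hypothesis, so that clause-lengths and degrees never exceed $k$ and the ``constant'' bounds above stay valid throughout the run.
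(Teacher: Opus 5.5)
Your proposal is correct and takes the same route as the paper's proof. Both build occurrence lists and degree arrays in linear time, then perform each checked step in time bounded only in $k$, with Condition (iv) as the costliest part; you additionally spell out what the paper leaves implicit, namely the worklist invariant for singular variables, the bound of $n(F)$ steps, and the localisation of Condition (iv) via occurrence lists plus a flag for $\bot$.
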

\begin{proof}
Using standard data structures, we can initially compute the array, which yields for each literal its occurrences in clauses, and the array of the literal-degrees, all in linear time.
Finding all (initial) singular variable can thus be done also in linear time.
Now for each singular variable, performing the checks (using the literal-clause occurrence lists; note that the most costly step, checking Condition (iv) from Definition \ref{def:checkedsDP}, takes time $O(k^3)$), and in the positive case replacing the main and side clauses with their resolvents, and updating the literal degrees (together with determining and finding a new singular variable), can all be done in constant time (in $k$).
\end{proof}

\begin{theorem}\label{thm:2MUlintime}
  Whether for $F \in \Pcls 2$ holds ``$F \in \Musat$ ?'' can be decided in linear time.
\end{theorem}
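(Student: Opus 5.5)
We decide ``$F \in \Musat$?'' for $F \in \Pcls 2$ in three phases: cheap preliminary checks, full checked sDP-reduction via \cref{thm:boundeddegree}, and a check of the resulting nonsingular clause-set.

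First, in linear time we compute all literal degrees. If $\bot \in F$, then $F \in \Musat$ iff $F = \set{\bot}$. Otherwise $F \in \Bclss$, and we use the basic facts from \cref{sec:review2MU}: every $F \in \Bmusat$ has all literal degrees in $\set{1,2}$. So if some literal has degree $\ge 3$, or some variable has only one of its two literals occurring (a pure literal, hence $F \notin \Musat$), we reject. We may thus assume all literal degrees are at most $2$.

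To apply \cref{thm:boundeddegree} with $k = 2$ we need a class $\mc C$ that is stable under csDP-reduction. We take $\mc C$ to be the clause-sets in $\Pcls 2$ without the empty clause and with all literal degrees at most $2$. The key stability argument is as follows. Let $v$ be singular, with main clause $C$ (of length $\le 2$) and side clauses $D_i$, where $m \le 2$.
\begin{itemize}
\item Each resolvent $(C \cup D_i) \sm \set{v,\ol v}$ has length at most $2$.
\item Let $C = \set{v, a}$ (or $C = \set{v}$). In the new clause-set, the literal $a$ loses its occurrence in $C$ and gains at most $m \le 2$ occurrences in the resolvents. We must prevent its degree from rising to $3$. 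If $m = 2$ and $C = \set{v,a}$, then $a$ occurs in both resolvents $D_1 \sm \set{\ol v} \cup \set{a}$ and $D_2 \sm \set{\ol v} \cup \set{a}$, so $\ldeg(a)$ becomes $\ldeg(a) - 1 + 2$. This stays $\le 2$ only if $\ldeg(a) = 1$ beforehand.
\item If instead $\ldeg(a) = 2$ beforehand, we treat this as an additional failure condition. It is justified because the resulting $F'$, if MU, would have to satisfy the literal-degree bound; by \cref{lem:csDPprop}, $F \in \Musat \Lra F' \in \Musat$, so a degree-$3$ literal in $F'$ proves $F \notin \Musat$.
\item Literals of the $D_i$ keep their degrees, since each $D_i$ is replaced by exactly one resolvent.
\end{itemize}
With this modification (a constant-time check that the degree of the one affected literal stays $\le 2$, rejecting otherwise), the class is stable, and the linear-time argument of \cref{thm:boundeddegree} goes through unchanged. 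An empty resolvent can arise only from unit clauses $C = \set{v}$, $D_i = \set{\ol v}$; we allow $\bot$ in $\mc C$ and stop as soon as it appears, which is harmless.

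Finally we handle the outcome. If the reduction FAILs, we reject by \cref{lem:csDPprop}. Otherwise the result $F'$ has no singular variables, and $F \in \Musat \Lra F' \in \Musat$. By the review of the nonsingular elements of $\Bmusat$, $F' \in \Musat$ iff either $F' = \set{\bot}$ or $F'$ is isomorphic to some $\Bpt k$ with $k \ge 2$. (Note $F' = \top$ is not MU.) Recognising $\Bpt k$ is linear:
\begin{itemize}
\item check that all clauses are binary, every literal has degree exactly $2$, and $c(F') = 2n(F')$;
\item check that the implication digraph $\idg(F')$ consists of the equivalences $v_i \lra v_{i+1}$ forming a single cycle closing with $v_k \lra \neg v_1$.
\end{itemize}
The latter amounts to checking that each clause $\set{a,b}$ comes with its partner $\set{\ol a,\ol b}$, and that the resulting ``equivalence graph'' on variables is one cycle with an odd number of sign flips. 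This is a linear traversal.

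The main obstacle is the stability requirement for \cref{thm:boundeddegree}. Degrees may grow under sDP-reduction on non-MU inputs, and the argument above (an extra degree-check justified by \cref{lem:csDPprop}) is where care is needed.
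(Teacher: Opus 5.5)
Your proposal is correct and follows the paper's proof: run full checked sDP-reduction in linear time via \cref{thm:boundeddegree} with $k=2$, then test whether the result is $\set{\bot}$ or isomorphic to some $\Bpt k$. Your extra step of rejecting as soon as a literal degree would reach $3$ (justified by \cref{lem:csDPprop} and the degree facts for $\Bmusat$) is a worthwhile addition. The paper only asserts that degree at most $2$ is preserved under checked sDP-reduction, and this is false for non-MU inputs (a main clause $\set{v,a}$ with two side clauses and $\ldeg_F(a)=2$ already breaks it), so your check is what actually secures the stability hypothesis of \cref{thm:boundeddegree}.
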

\begin{proof}
By \cref{sec:review2MU} we can apply \cref{thm:boundeddegree} (with $k=2$), and compute the full checked sDP-reduction for $F$ in linear time, obtaining $F'$.
If the reduction fails or $k := \delta(F) \le 0$, then $F \notin \Musat$.
Otherwise $F \in \Musat$ iff $F' = \set{\bot}$ for $k = 1$, or else $F'$ is isomorphic to $\Bpt k$.
These checks can also be performed in linear time.
\end{proof}

\begin{question}\label{que:2MUlintime}
  Finding an element of $\mus(F)$ for $F \in \Pcls 2$ can be done in quadratic time, by the trivial algorithm:
  \begin{algorithmic}
    \Require $F \in \Usat$
    \Ensure $F' \in \mus(F)$
    \Function{find}{$F \in \Pcls 2$} \Comment{works indeed for all $F \in \Cls$}
      \State {$F' \gets F$}
      \For{$C \in F$}
        \If{$F' \sm \set{C} \in \Usat$}
          $F' \gets F' \sm \set{C}$
        \EndIf
      \EndFor
      \State \Return{$F'$}
    \EndFunction
  \end{algorithmic}
  Is there an asymptotically faster algorithm?
  Given that we can decide 2-MU in linear time (Theorem \ref{thm:2MUlintime}), perhaps we can even find a 2-MU in linear time?
\end{question}

\section{NP-completeness of finding simple MUSs}
\label{sec:def1MUSNPcompl}

We now start investigating ``simple'' MUSs of 2-CNFs $F$.
Considering the complexity parameter \emph{deficiency} for MU, the simplest MUSs of $F$ have deficiency one.
\cite{KBWS20162CNFJ} showed that deciding the existence of an MUS of deficiency one, that is, whether $\mus(F) \cap \Musati{\delta=1} \ne \es$ holds, is NP-complete, based on the NP-completeness of deciding whether a digraph contains two vertex-disjoint paths between two given pairs of vertices.
We can give now a direct proof of this (not using ``read-once resolution'' as in \cite{KBWS20162CNFJ}), directly using that Families III, IV (recall Subsection \ref{sec:fourfamilies}) contain two ``independent'' paths.
This allows also to pinpoint the source of hardness more precisely, and we will later see, that Families I, II, containing only one ``independent'' path, can be found efficiently.

We start by considering special digraphs, with fixed vertices $s, t$ (and for convenience use variables as vertices, excluding the special variable $x_0$ used to construct the Family-III-MUS):
\begin{definition}\label{def:stdigraphs}
  Consider three distinct variables $s, t, x_0 \in \Va$ (fixed from now on).
  An \textbf{st-digraph} is a digraph $G$ with $V(G) \sse \Va$, $x_0 \notin V(G)$, and $s, t \in V(G)$, such that $(s,t) \notin E(G)$ and $(t,s) \notin E(G)$.
\end{definition}

Formally an st-digraph is a digraph with partial complementation, but we only consider positive literals (i.e., variables) here, and st-digraphs don't have negative literals.

\begin{definition}[{\cite[Theorem 4.1]{KBWS20162CNFJ}}]\label{def:tFC}
  Consider an st-digraph $G$.
  The clause-set $\tFC(G) \in \Pcls 2$ (without unit-clauses or the empty clause) is defined as follows:
  \begin{enumerate}
  \item Let $t: V(G) \ra (V(G) \sm \set{s,t}) \cup \set{x_0, \ol{x_0}}$ be defined by
    \begin{displaymath}
      t(v) :=
      \begin{cases}
        x_0 & \text{if } v = s\\
        \ol{x_0} & \text{if } v = t\\
        v & \text{otherwise}
      \end{cases}.
    \end{displaymath}
  \item We use furthermore $t((a,b))$ for $a, b \in V(G)$ for the clause $\set{\ol{t(a)}, t(b)}$.
  \item Now $\tFC(G) := \set{t(e) : e \in E(G)}$.
  \end{enumerate}
\end{definition}

We have $\var(\tFC(G)) \sse (V(G) \sm \set{s,t}) \cup \set{x_0}$.
The arc $(s,t)$ would yield the unit-clause $\set{\ol{x_0}}$, while the arc $(t,s)$ would yield $\set{x_0}$ --- we excluded these two arcs in order to avoid having to handle these trivial cases.
Expanding on \cite[Definition 4.2]{KBWS20162CNFJ}:
\begin{definition}\label{def:CDPP}
  A \textbf{C-DPP instance} is an st-digraph $G$:
  \begin{itemize}
  \item $G$ \textbf{has a special closed walk} if there is a path $P_1$ from $s$ to $t$ and a path $P_2$ from $t$ to $s$.
  \item If such paths exist with $V(P_1) \cap V(P_2) = \set{s,t}$, then we say that $G$ \textbf{has a special cycle}.
  \end{itemize}
\end{definition}
Here ``DPP'' stands for ``disjoint path problem'', while ``C'' stands for ``cyclic''.
For examples see \cref{sec:appendixexamplescycles}.

We are ready to prove that $\tFC(G)$ encodes the two special paths from a special cycle into the two implication chains constituting the MUs of Family III (recall Subsection \ref{sec:fourfamilies}):
\begin{theorem}\label{thm:basicharactFG}
  Consider a C-DPP instance $G$:
  \begin{enumerate}
  \item\label{thm:basicharactFG1} Without both vertices $s, t$ having incoming as well as outgoing arcs, $\tFC(G)$ is satisfiable.
  \item\label{thm:basicharactFG2} Every $F' \in \mus(\tFC(G))$ contains four occurrences of variable $x_0$ (i.e., $\vdeg_{F'}(x_0) = 4$).
  \item\label{thm:basicharactFG3} $G$ has a special closed walk iff $\tFC(G)$ is unsatisfiable.
  \item\label{thm:basicharactFG4} $G$ has a special cycle iff $\mus(\tFC(G)) \cap \FIII \ne \es$.
  \item\label{thm:basicharactFG5} If $G$ does not have a special cycle, then for each $F' \in \mus(\tFC(G))$ we have $\delta(F') \ge 2$.
  \end{enumerate}
\end{theorem}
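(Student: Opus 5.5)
The plan is to read everything off the implication digraph $\idg(\tFC(G))$. Besides $x_0,\ol{x_0}$, every vertex is a positive or a negative literal, and $\tFC(G)$ consists only of binary clauses $\set{\ol{t(a)},t(b)}$, so every arc of $G$ gives the arc $t(a)\ra t(b)$ and its contraposition $\ol{t(b)}\ra\ol{t(a)}$. Call the first kind positive arcs. They form a copy of $G$ in which $s$ is renamed $x_0$ and $t$ is renamed $\ol{x_0}$. The contrapositions form a copy of $\trans G$ on the negative literals, with $s$ renamed $\ol{x_0}$ and $t$ renamed $x_0$. The key observation is that positive arcs enter negative literals only at $\ol{x_0}$ (the image of $t$) and leave negative literals only at $\ol{x_0}$. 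Symmetrically, arcs between the two copies pass only through $x_0$ or $\ol{x_0}$.

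For part~\ref{thm:basicharactFG3} I use the standard 2-SAT criterion: $\tFC(G)$ is unsatisfiable iff $\reach{x_0}{\ol{x_0}}{}$ and $\reach{\ol{x_0}}{x_0}{}$ hold in the implication digraph. Take a shortest path from $x_0$ to $\ol{x_0}$; its interior avoids $x_0,\ol{x_0}$, so it lies entirely in one copy. In the positive copy it is a path $s\ra t$ of $G$. In the negative copy it corresponds, after reversing, to a path in $G$ from $t$ to $s$... I must redo this carefully: a negative-copy path from $x_0$ (which is $t$) to $\ol{x_0}$ (which is $s$) in $\trans G$ is a path $s\ra t$ in $G$. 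In the same way, any path $\ol{x_0}\ra x_0$ gives a path $t\ra s$ in $G$. Conversely, paths $s\ra t$ and $t\ra s$ in $G$ map directly to the two needed implication paths, which proves the equivalence. Part~\ref{thm:basicharactFG1} then follows: if $s$ or $t$ lacks an incoming or outgoing arc, one of the two paths cannot exist.

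For part~\ref{thm:basicharactFG2}, take $F'\in\mus(\tFC(G))$. Since $F'$ is unsatisfiable, the argument of part~\ref{thm:basicharactFG3} applies to $F'$ and gives four implication segments. There is a path $x_0\ra\ol{x_0}$ whose first clause contains $\ol{x_0}$ and whose last clause contains $\ol{x_0}$; these are distinct clauses because no unit clauses occur. Likewise a path $\ol{x_0}\ra x_0$ contributes two distinct clauses containing $x_0$. This gives $\ldeg_{F'}(x_0)\ge2$ and $\ldeg_{F'}(\ol{x_0})\ge2$. Section~\ref{sec:review2MU} bounds literal degrees in 2-MUs by $2$, so $\vdeg_{F'}(x_0)=4$.

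For parts~\ref{thm:basicharactFG4} and~\ref{thm:basicharactFG5}, first note that $F'$ has no unit clauses, so $u(F')=0$ and $F'\notin\FI\cup\FII$. A variable of degree~4 excludes $\FIV$, which has $n_4=0$. Hence if $\delta(F')=1$ then $F'\in\FIII$, and part~\ref{thm:basicharactFG5} follows from part~\ref{thm:basicharactFG4}. For part~\ref{thm:basicharactFG4}, suppose $F'\in\FIII$. By part~\ref{thm:basicharactFG2} its degree-4 variable must be $x_0$, and $F'$ is a cycle $x_0\ra\dots\ra\ol{x_0}\ra\dots\ra x_0$ with pairwise distinct internal variables. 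By the copy argument each of the two chains translates into a path of $G$, one $s\ra t$ and one $t\ra s$. Their internal vertices are distinct variables, so the paths meet only in $s$ and $t$. This is a special cycle.

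Conversely, suppose $G$ has a special cycle $P_1,P_2$. I take the positive images of both paths, giving the clause-set $x_0\ra\dots\ra\ol{x_0}\ra\dots\ra x_0$ with all internal variables distinct. This is exactly the generic Family~III formula, and it is a subset of $\tFC(G)$. The main obstacle is the bookkeeping in the orientation of the negative copy. I would settle this with one careful lemma stating that arcs cross between the copies only through $x_0$ and $\ol{x_0}$, together with the exact translation of paths in each copy.
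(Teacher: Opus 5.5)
Your decomposition of $\idg(\tFC(G))$ into a positive copy of $G$ and a negative copy of $\trans G$ is the paper's contraposition argument in structural form. Parts 4 and 5 go through as you describe.

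\textbf{Gap in Part 3.} The standard 2-SAT criterion only says that $\tFC(G)$ is unsatisfiable iff \emph{some} literal $z$ has $z \to \ol z$ and $\ol z \to z$ in $\idg(\tFC(G))$. You state it with $z = x_0$ fixed, without argument. That localisation at $x_0$ is exactly the non-trivial content of ``unsatisfiable $\Rightarrow$ special closed walk''. You derive Part 1 from Part 3 and obtain Part 2 by rerunning the same argument on $F'$, so Parts 1--3 all rest on it. The paper gets this localisation from Part 1, which it proves independently by explicit satisfying assignments; in your ordering that is not available.

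Your crossing lemma does close the gap, but you must apply it here, not only to orientation bookkeeping. Suppose $\var(z) \ne x_0$. Then $z$ and $\ol z$ lie in different copies, so a path from $z$ to $\ol z$ passes through some $w \in \set{x_0, \ol{x_0}}$. Contraposing its two segments $z \to w$ and $w \to \ol z$ gives $\ol w \to \ol z$ and $z \to \ol w$. Hence $w \to \ol z \to z \to \ol w$ and $\ol w \to \ol z \to z \to w$, i.e.\ $x_0$ and $\ol{x_0}$ reach each other.

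\textbf{Smaller slip in Part 2.} The first and last clauses of a path from $x_0$ to $\ol{x_0}$ are $\set{\ol{x_0}, w}$ and $\set{\ol{w'}, \ol{x_0}}$. They coincide iff $w' = \ol w$, i.e.\ the last arc is the contraposition of the first. Absence of unit clauses does not rule this out; it only excludes length $1$. What does rule it out is that the interior of such a path lies in a single copy, so $w$ and $w'$ are both positive or both negated. Without this, $\ldeg_{F'}(\ol{x_0}) \ge 2$ does not follow.

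With these two repairs the proof is complete. It differs from the paper only in deriving Part 1 from Part 3 rather than from explicit assignments.
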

\begin{proof}
For Part \ref{thm:basicharactFG1}, the satisfying assignments for the four cases are as follows, using $V := V(G) \sm \set{s,t}$, and mapping all elements of $V$ to the same value as indicated, and stating the missing clauses of $\tFC(G)$ which would be violated by the assignment for any $v \in V$:
\begin{itemize}
\item no outgoing arc from $s$: $x_0 \mapsto 1$, $V \mapsto 0$; violates $x_0 \ra v \doteq \set{\ol{x_0},v}$;
\item no incoming arc to $s$: $x_0 \mapsto 0$, $V \mapsto 1$; violates $v \ra x_0 \doteq \set{x_0, \ol v}$;
\item no outgoing arc from $t$: $x_0 \mapsto 0$, $V \mapsto 0$; violates $\ol{x_0} \ra v \doteq \set{x_0, v}$;
\item no incoming arc to $t$: $x_0 \mapsto 1$, $V \mapsto 1$; violates $v \ra \ol{x_0}\doteq \set{\ol{x_0}, \ol v}$.
\end{itemize}
Part \ref{thm:basicharactFG2} follows from Part \ref{thm:basicharactFG1} and the fact, that all four (violated) clause-types must be present.
For Part \ref{thm:basicharactFG3} recall that any $F \in \Bclss$ is unsatisfiable iff $\idg(F)$ has a contradictory walk (a walk in the implication digraph from some literal $x$ to $\ol x$ and back).
If there is a special closed walk in $G$, then this yields immediately a contradictory walk in $\idg(\tFC(G))$ from $x_0$ to $\ol{x_0}$ and back, and thus $\tFC(G)$ is unsatisfiable.
For the other direction, by Part \ref{thm:basicharactFG1}, every contradictory walk in $\idg(\tFC(G))$ must contain $x_0$ as well as $\ol{x_0}$, and thus contains a contradictory walk $P_1$ from $x_0$ to $\ol{x_0}$ and back via $P_2$.
If $P_1$ or $P_2$ contains some complemented literal $\ol v$ for $v \in V$, then all internal vertices must be of this type, and then the contraposition $\ol{P_1}$ resp.\ $\ol{P_2}$ consists only of internal vertices from $V$, and we obtain a special closed walk as required.

Part \ref{thm:basicharactFG4} follows immediately from the characterisation of Family-III-MUSs (Subsection \ref{sec:fourfamilies}) as implication chains $x \ra \dots \ra \ol x \ra \dots \ra x$, where the literals in the dotted parts all have underlying new variables (using again the above argument, that we can always make $P_1, P_2$ to contain only positive literals).
Finally, for Part \ref{thm:basicharactFG5} assume that $G$ does not have a special cycle, and consider $F' \in \mus(\tFC(G))$.
If $\delta(F') < 2$, then indeed $\delta(F') = 1$, and so $F'$ must be one of the Families I-IV.
Now Families I, II are excluded since $F'$ does not have unit-clauses, while Family IV is excluded since $F'$ must contain a degree-$4$ variable (Part \ref{thm:basicharactFG2}).
And finally Family III is excluded by Part \ref{thm:basicharactFG4}, and thus $\delta(F') \ge 2$ must hold.
\end{proof}

We obtain the following generalisation of \cite[Corollary 4.3]{KBWS20162CNFJ} (deciding, whether a 2-CNF contains a deficiency-1-MUS, is NP-complete):
\begin{theorem}\label{thm:gendef1NPC}
  Consider any class $\FIII \sse \mf C \sse \Musati{\delta=1}$.
  Deciding whether $F \in \Pcls 2$ has an MUS belonging to $\mf C$, that is, whether $\mus(F) \cap \mf C \ne \es$ holds, is NP-complete.
\end{theorem}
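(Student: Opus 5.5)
Membership in NP and NP-hardness are handled separately; hardness is a reduction from C-DPP via $\tFC$, using \cref{thm:basicharactFG}.

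\textbf{Membership in NP.} A certificate is a subset $F' \sse F$ with $F' \in \mf C$. We must check $F' \in \Musat$, which takes linear time by \cref{thm:2MUlintime}, and $F' \in \mf C$. Here there is a subtlety, since $\mf C$ is an arbitrary class between $\FIII$ and $\Musati{\delta=1}$. For NP membership we need polytime membership testing for $\mf C$ (on 2-CNFs), so we must either assume that $\mf C$ is polytime decidable or argue it separately. The natural examples are unions of some of the Families I--IV. These are decidable in linear time via the degree characterisations of \cref{sec:fourfamilies}: count $n_3, n_4$ and $u$, and check $\delta=1$. So I would state membership under the assumption that $\mf C \cap \Pcls 2$ is polytime recognisable, which is automatic when $\mf C$ is a union of the four families. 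Since the families partition the deficiency-1 part of $\Bmusat$, every such $\mf C$ restricted to 2-CNFs is a union of families. Hence the certificate check is always linear: decide 2-MU, then read off the family from degrees and unit-clauses.

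\textbf{NP-hardness.} I reduce from the C-DPP problem, namely whether an st-digraph has a special cycle. This is NP-complete as a variant of the two vertex-disjoint paths problem (Fortune--Hopcroft--Wyllie), as used in \cite{KBWS20162CNFJ}. Given $G$, output $\tFC(G)$, which is computable in linear time. The claim is that $G$ has a special cycle iff $\mus(\tFC(G)) \cap \mf C \ne \es$.
\begin{itemize}
\item Forward direction: by \cref{thm:basicharactFG}, Part \ref{thm:basicharactFG4}, a special cycle yields an MUS in $\FIII \sse \mf C$.
\item Backward direction: if $G$ has no special cycle, then by Part \ref{thm:basicharactFG5} every MUS has deficiency at least $2$. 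Since $\mf C \sse \Musati{\delta=1}$, the intersection is empty.
\end{itemize}

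\textbf{Main obstacle.} The key point is that $\FIII \sse \mf C$ handles the positive side and $\mf C \sse \Musati{\delta=1}$ handles the negative side, so Parts \ref{thm:basicharactFG4} and \ref{thm:basicharactFG5} give exactly the two needed directions. The only real subtlety is the NP-membership caveat above, together with ensuring that the source problem C-DPP, with the excluded arcs $(s,t),(t,s)$, is NP-complete. The latter holds because those arcs can be subdivided without affecting the existence of a special cycle.
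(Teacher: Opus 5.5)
Your NP-hardness argument is exactly the paper's proof. It uses the linear-time translation $G \mapsto \tFC(G)$ from C-DPP, with Part~\ref{thm:basicharactFG4} of \cref{thm:basicharactFG} giving the positive direction via $\FIII \sse \mf C$ and Part~\ref{thm:basicharactFG5} giving the negative direction via $\mf C \sse \Musati{\delta=1}$. Your subdivision remark for the excluded arcs $(s,t),(t,s)$ is a correct extra detail that the paper leaves to the citation of \cite{KBWS20162CNFJ}.

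The membership part, however, contains a false step. That Families I--IV partition $\Musati{\delta=1} \cap \Pcls 2$ does \emph{not} imply that $\mf C \cap \Pcls 2$ is a union of families, since $\mf C$ may contain an arbitrary selection of members of $\FI$, $\FII$, $\FIV$. For example, take an undecidable $H \sse \NN$ and set $\mf C := \FIII \cup \set{F \in \FIV : n(F) \in H}$. Family IV contains MUs with $n$ variables for every $n \ge 4$ (lengthen the chains). Such an input $F$ satisfies $\mus(F) = \set{F}$ and $F \notin \FIII$, so the answer is ``yes'' iff $n(F) \in H$. The decision problem is then not even decidable, let alone in NP.

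So your first instinct was right: membership in NP genuinely requires that $\mf C \cap \Pcls 2$ be polytime recognisable. Your check (\cref{thm:2MUlintime}, then reading off $\delta$, $n_3$, $n_4$, $u$) provides this whenever $\mf C$ is a union of families, e.g.\ $\FIII$, $\FIII \cup \FIV$ or $\Musati{\delta=1}$. The paper's own proof only establishes NP-hardness (reduction plus NP-completeness of C-DPP) and leaves membership implicit. You should delete the claim that every such $\mf C$ is a union of families and state membership under that hypothesis, with NP-hardness holding for arbitrary $\mf C$.
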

\begin{proof}
Given an C-DPP instance $G$ as input, we transform it to $\tFC(G)$ in polynomial time (indeed in linear time).
Now $G$ has a special cycle iff $\tFC(G) \in \mf C$, by Theorem \ref{thm:basicharactFG}, Parts \ref{thm:basicharactFG4} and \ref{thm:basicharactFG5}.
The assertion follows now by the NP-completeness of the decision-problem, whether a C-DPP instance has a special cycle (\cite[Lemma 4.1]{KBWS20162CNFJ}).
\end{proof}

By small modifications, we can show that Family IV is similarly hard to find:
\begin{theorem}\label{thm:gendef1NPCvariant}
  Consider any class $\FIV \sse \mf C \sse \Musati{\delta=1}$.
  Deciding whether $F \in \Pcls 2$ has an MUS belonging to $\mf C$, that is, whether $\mus(F) \cap \mf C \ne \es$ holds, is NP-complete.
\end{theorem}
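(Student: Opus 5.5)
The plan is to mimic the proof of Theorem \ref{thm:gendef1NPC}: reduce from C-DPP (NP-complete by \cite[Lemma 4.1]{KBWS20162CNFJ}) via a modified encoding $\tFC'(G)$. This encoding should turn a special cycle into an MUS of Family IV rather than Family III, and it should have no deficiency-1 MUS at all when there is no special cycle. Membership in NP is clear, since an MUS of a 2-CNF can be checked in polynomial time (even linear time, by Theorem \ref{thm:2MUlintime}).

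\emph{Construction.} Family IV has the shape $x \ra \dots \ra \ol x \ra \dots \ra y \ra \dots \ra \ol y$. I therefore split the role of $s$ between two fresh variables $x_0, y_0$, and add one fresh variable $z$.
\begin{itemize}
\item Arcs leaving $s$ are translated with source $x_0$.
\item Arcs entering $t$ are translated with target $\ol{x_0}$, and arcs leaving $t$ with source $\ol{x_0}$.
\item Arcs entering $s$ are translated with target $y_0$.
\item We add the fixed chain $y_0 \ra z \ra \ol{y_0}$, that is, the clauses $\set{\ol{y_0}, z}$ and $\set{\ol{y_0}, \ol z}$.
\end{itemize}
Under this encoding a path $P_1$ from $s$ to $t$ becomes a chain $x_0 \ra \dots \ra \ol{x_0}$. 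A path $P_2$ from $t$ to $s$ becomes a chain $\ol{x_0} \ra \dots \ra y_0$, which then continues with $y_0 \ra z \ra \ol{y_0}$. If $V(P_1) \cap V(P_2) = \set{s,t}$, the internal variables are pairwise distinct, and we obtain exactly an element of $\FIV$ with $\var_3 = \set{x_0, y_0}$. No unit-clauses are ever created, because $(s,t),(t,s) \notin E(G)$ and the two added clauses are binary.

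\emph{Analogue of Theorem \ref{thm:basicharactFG}.} First, as in Part \ref{thm:basicharactFG1}, I give explicit satisfying assignments, mapping all of $V(G) \sm \set{s,t}$ to a constant and choosing values for $x_0, y_0, z$ suitably. These show that every MUS must use an arc-type from each of the four groups at $s$ and $t$, and must also contain both clauses of the $z$-chain.

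Second, I show that unsatisfiability forces a contradictory walk through $x_0, \ol{x_0}$ and $y_0, \ol{y_0}$. As in the proof of Part \ref{thm:basicharactFG3}, I use contraposition to normalise the connecting paths so that all their internal literals are positive variables. Hence $\tFC'(G)$ is unsatisfiable iff $G$ has a special closed walk.

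Third, every deficiency-1 MUS $F'$ has no unit-clauses, so it is in $\FIII$ or $\FIV$. I must exclude $\FIII$ and show that $F' \in \FIV$ forces two internally disjoint paths in $G$, i.e.\ a special cycle. Conversely, a special cycle yields an element of $\FIV$ as described above. Consequently $G$ has a special cycle iff $\mus(\tFC'(G)) \cap \mf C \ne \es$, for any $\FIV \sse \mf C \sse \Musati{\delta=1}$.

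\emph{Main obstacle.} The hard step is this last structural case analysis. I must rule out a Family-III MUS, for instance one with $x_0$ of degree $4$ that uses two arcs leaving $t$, or one built only around $y_0$. I must also rule out a Family-IV MUS whose two degree-3 variables are not $\set{x_0,y_0}$.

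I would first try a degree-counting argument on the MUS. The literal $y_0$ occurs positively only in clauses coming from arcs into $s$, and $\ol{y_0}$ occurs only in the two $z$-clauses. Combined with the necessity facts from the first step, this should pin $y_0$ down as the end variable $y$ of the Family IV chain. Next, $x_0$ must have degree exactly $3$, since $n_4(F') = 0$ in Family IV. Its occurrences then force the order $x_0 \ra \dots \ra \ol{x_0} \ra \dots \ra y_0$, and the disjointness of internal variables in the chain representation gives the special cycle.

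If degree counting alone does not exclude Family III, the fallback is to subdivide arcs so that $t$ has controlled out-degree in any MUS. Alternatively, one can add further fresh "guard" variables on the $t$-side arcs, so that a degree-4 variable is impossible.
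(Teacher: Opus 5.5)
Your encoding does not work, and the failure is already in your claimed analogue of Theorem~\ref{thm:basicharactFG}. It is false that every MUS uses all four arc groups, and it is false that unsatisfiability of $\tFC'(G)$ yields a special closed walk. Take $V(G)=\set{s,t,v}$ and $E(G)=\set{(s,v),(t,v),(v,s)}$. Since $t$ has no in-arc, $G$ has no special closed walk. Your translation gives $\set{\ol{x_0},v},\ \set{x_0,v},\ \set{\ol v,y_0},\ \set{\ol{y_0},z},\ \set{\ol{y_0},\ol z}$. This set is unsatisfiable and is itself an element of $\FIV$: it is the chain $y_0 \ra z \ra \ol{y_0} \ra \ol v \ra x_0 \ra v$, with $\var_3=\set{y_0,v}$. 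So for every admissible $\mf C$, even $\mf C=\FIV$, a no-instance is mapped to a yes-instance. Likewise $E(G)=\set{(t,a),(a,s),(s,b),(b,s)}$ gives the Family-III MU $y_0 \ra z \ra \ol{y_0} \ra \ol a \ra x_0 \ra b \ra y_0$.

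The cause is that the gadget $y_0 \ra z \ra \ol{y_0}$ acts as the unit $\set{\ol{y_0}}$. It falsifies every vertex that reaches $s$, and nothing links ``into $s$'' back to ``out of $s$''. Meanwhile any common successor of $s$ and $t$ is forced true, because the out-literals of $s$ and $t$ are $x_0$ and $\ol{x_0}$. Your proposed repairs (degree counting, subdividing arcs, guards against degree-$4$ variables) cannot fix this. The spurious MUS above already lies in $\FIV$ and has no degree-$4$ variable, so the encoding itself must change.

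The missing idea, which is the paper's $\tFC'(G)$, is to split both $s$ and $t$ into an in-literal and an out-literal linked by a single clause:
\begin{itemize}
\item keep $x_0 \ra v$ for $(s,v)$ and $v \ra \ol{x_0}$ for $(v,t)$;
\item use $y_0 \ra v$ for $(t,v)$ and $v \ra \ol{y_0}$ for $(v,s)$;
\item add $\set{x_0,y_0}$, i.e.\ $\ol{x_0} \ra y_0$ together with its contraposition $\ol{y_0} \ra x_0$.
\end{itemize}
Now $x_0$ and $y_0$ each occur positively only in $\set{x_0,y_0}$. Constant assignments as in Part~\ref{thm:basicharactFG1} of Theorem~\ref{thm:basicharactFG} show that every MUS $F'$ contains this clause and a clause from each of the four groups. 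Since literal degrees in 2-MUs are at most $2$, this gives $\vdeg_{F'}(x_0)=\vdeg_{F'}(y_0)=3$, hence $n_3(F')\ge 2$. That excludes Family III outright and forces $\var_3(F')=\set{x_0,y_0}$ in Family IV. The middle segment of the chain is then the single arc coming from $\set{x_0,y_0}$. The two remaining segments stay within one sign on $V$, so after contraposition if necessary they are internally disjoint paths $s\to t$ and $t\to s$.

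A smaller point: your NP-membership remark also needs $F'\in\mf C$ to be checkable in polynomial time. That holds for natural classes such as $\FIV$ or $\Musati{\delta=1}$, but not for an arbitrary intermediate $\mf C$. The paper is silent on this as well.
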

\begin{proof}
We introduce a new variable $y_0$, and use the modified translation $\tFC'(G)$, which adds the new binary clause $\ol{x_0} \ra y_0$, and for outgoing arcs $(t,v)$ uses the clause $y_0 \ra v$ (instead of $\ol x_0 \ra v$), and for incoming arcs $(v,s)$ uses the clause $v \ra \ol{y_0}$ (instead of $v \ra x_0$).

While with Theorem \ref{thm:basicharactFG} we obtain that every MUS of $\tFC(G)$ contains four clauses of the types $x_0 \ra v$, $v \ra x_0$, $v \ra \ol{x_0}$, $\ol{x_0} \ra v$, now every MUS of $\tFC'(G)$ contains $\ol{x_0} \ra y_0$ (yielding also the contraposed arc $\ol{y_0} \ra x_0$) plus the four clauses of the types $x_0 \ra v$, $v \ra \ol{y_0}$, $v \ra \ol{x_0}$, $\ol{y_0} \ra v$.
Thus $x_0$ occurs positively once and negatively twice, while $y_0$ occurs positively twice and negatively once, that is, we have two degree-3-variables.
With the similar proofs as for Theorem \ref{thm:basicharactFG} we obtain now, that $G$ has a special cycle iff $\mus(\tFC'(G)) \cap \FIV \ne \es$, and if $G$ does not have a special cycle, then for each $F' \in \mus(\tFC'(G))$ we have $\delta(F') \ge 2$.
So we can adapt the proof of Theorem \ref{thm:gendef1NPC}.
\end{proof}

For an example of the translation $\tFC'(G)$ see \cref{sec:appendixexamplescycles}.

\section{Finding simplest MUSs}
\label{Sec:def1MUSsimple}

In the previous section we have seen that deciding the existence of some MUS belonging to Family III or IV of a 2-CNF is NP-complete.
Now we turn to Family I (two unit-clauses; recall 2-MU has at most two unit-clauses) and Family II (one unit-clause), and show how to find them efficiently.

\subsection{Two unit-clauses}
\label{sec:twounits}

Via the regular paths from $x$ to $\ol y$ in the implication digraph (denoted by $\mf R_{x, \ol y}$; recall \cref{def:regular}) we get an efficient handle on MUSs containing the unit-clauses $\set{x}, \set{y}$:
\begin{theorem}\label{thm:twounits}
  Consider $F \in \Bclss$ and $\set{x}, \set{y} \in F$, $\set{x} \ne \set{y}$.
  Then the map $P \in \mf R_{x, \ol y}(\idg(F)) \mapsto F(P) := \Cl(E(P)) \cup \set{\set{x},\set{y}} \sse F$ is a bijection from $\mf R_{x, \ol y}(\idg(F))$ to $\mus_{\set{x},\set{y}}(F)$.
\end{theorem}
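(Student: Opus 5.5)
The plan is to reduce everything to the description of Family I in \cref{sec:fourfamilies}: an element of $\Bmusat$ with two unit-clauses is exactly a Family-I clause-set, either of type Ia or of type Ib. First I dispose of the degenerate case $y = \ol x$. Then $\ol y = x$, and the only regular path from $x$ to $x$ is the trivial one-vertex path $P$, which has no arcs. So $F(P) = \set{\set{x},\set{\ol x}}$, which is of type Ia. Conversely, any $F' \in \mus_{\set{x},\set{\ol x}}(F)$ contains the unsatisfiable $\set{\set{x},\set{\ol x}}$, so by minimality $F'$ equals it. From now on I assume $y \ne \ol x$; together with $\set{x} \ne \set{y}$ this gives $\var(x) \ne \var(y)$.

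\emph{Well-definedness.} Let $P \in \mf R_{x,\ol y}(\idg(F))$. Since $x \ne \ol y$, $P$ has length $\ge 1$. Because $V(P)$ is clashfree, the underlying variables of the vertices of $P$ are pairwise distinct, and no internal vertex has variable $\var(x)$ or $\var(y)$. Each arc $(a,b)$ of $P$ joins literals with different variables, so $\Cl((a,b))$ is a binary clause. Hence $\Cl(E(P))$ is precisely a chain $x \ra \dots \ra \ol y$ in the sense of \cref{sec:fourfamilies}, and $F(P)$ is of type Ib, so $F(P) \in \FI \sse \Bmusat$.

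As a sanity check, this can also be verified directly. Unit propagation from $\set{x},\set{y}$ along the chain derives $\bot$. If a unit-clause or the arc $(a_i,a_{i+1})$ is deleted, the result is satisfied by setting the literals of the prefix up to $a_i$ to true and those of the suffix to false; this assignment is consistent precisely by regularity. Clearly $F(P) \sse F$, since the unit-clauses belong to $F$ and every arc of $\idg(F)$ comes from a clause of $F$.

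\emph{Injectivity.} I will show that $P$ can be recovered from $F(P)$. Start from the unit-clause $\set{x}$ and perform unit propagation in $\Cl(E(P))$. Each variable occurs in at most two clauses of the chain, so the propagation is forced and visits exactly $x = a_0, a_1, \dots, a_k = \ol y$ in order. This determines the vertex sequence, and hence $P$. The subtle point is that a clause $\set{\ol a, b}$ yields the two arcs $(a,b)$ and $(\ol b,\ol a)$, so one must make sure that the orientation is recovered correctly. The propagation argument handles this, because it always moves from a literal already set to true to the next literal. In a write-up one could equally say that $V(P)$ is the set of literals forced true by $\set{x}$.

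\emph{Surjectivity.} This is the step I expect to be the main obstacle, since it requires translating the classification into paths. Take $F' \in \mus_{\set{x},\set{y}}(F)$. Then $u(F') \ge 2$, so $u(F') = 2$ and $F' \in \FI$. Since $y \ne \ol x$, $F'$ is not of type Ia and must be of type Ib. Its two unit-clauses are $\set{x}$ and $\set{y}$, so the binary part of $F'$ is a chain $x \ra a_1 \ra \dots \ra \ol y$, or the symmetric chain $y \ra \dots \ra \ol x$. These two describe the same clause-set, and I contrapose to fix the orientation starting at $x$. The internal variables are new and pairwise distinct, so the sequence $x, a_1, \dots, \ol y$ is a clashfree path in $\idg(F') \sse \idg(F)$, that is, a regular path $P$ from $x$ to $\ol y$ with $F(P) = F'$. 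The care needed here is in checking that ``internal variables distinct from each other and from those of the endpoints'' is exactly regularity, including that $\ol x$ and $y$ do not occur on the path.
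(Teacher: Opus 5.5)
Your proposal is correct and follows essentially the paper's route: well-definedness and surjectivity come from the Family~I characterisation, and injectivity rests on the observation that, starting from $x$, each successive arc is forced by the clause-set. You also verify MU-ness of $F(P)$ directly and treat $y = \ol x$ explicitly, both of which the paper leaves implicit; and where the paper runs an induction from the first arc to the contradiction $P' = \ol P$ with $\pfirst(\ol P) = y \ne x$, your forward unit-propagation reconstruction of $P$ is a slightly more direct form of the same argument.
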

\begin{proof}
The characterisation of Family I in Subsection \ref{sec:fourfamilies} shows that Family I handles exactly the case of two unit-clauses, and that the regular paths $P$ from $x$ to $\ol y$ (note that $P$ does not contain a unit-arc, due to regularity) via the map $P \mapsto F(P)$ cover exactly all elements of $\mus_{\set{x},\set{y}}(F)$.
It remains to show that we don't have double coverage, so consider two paths $P, P' \in \mf R_{x, \ol y}(\idg(F))$ with $F(P) = F(P')$, but $P \ne P'$.
Starting from the first arc in $P$, we can show by induction that $P' = \ol P$ follows, but $\pfirst(\ol P) = \ol{\plast(P)} = y \ne x$, a contradiction.
\end{proof}

\begin{corollary}\label{cor:twounits}
  $F \in \Bclss$ has a MUS containing two different unit-clauses $\set{x}, \set{y} \in F$ iff there is a regular path in the implication digraph of $F$ from $x$ to $\ol y$.
  This condition can be checked in linear time by the algorithm from \cite{GoldbergKarzanov1996SkewSymmetry} (and the MUS extracted in the positive case).
  Also finding a shortest such MUS can be done in linear time, where ``shortest'' interpreted as minimising $n(F'), c(F'), \ell(F')$ are all equivalent.
\end{corollary}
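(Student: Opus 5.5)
The first equivalence follows directly from \cref{thm:twounits}. Since $\{x\} \ne \{y\}$ we have $x \ne y$, and the theorem gives a bijection between $\mf R_{x,\ol y}(\idg(F))$ and $\mus_{\set{x},\set{y}}(F)$. So one side is nonempty iff the other is. Here ``an MUS containing both unit-clauses'' means exactly an element of $\mus_{\set{x},\set{y}}(F)$. The case $y = \ol x$ (Family Ia) needs a brief check: the trivial path consisting of the single vertex $x$ is regular and maps to $\set{\set{x},\set{\ol x}}$, which is consistent with the theorem.

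For the algorithmic part, I would reduce vertex-regular paths to arc-regular paths in a skew-symmetric digraph, which \cite{GoldbergKarzanov1996SkewSymmetry} handles in linear time. The reduction is the standard vertex-splitting mentioned in the introduction. Replace each literal $z$ by $z^{-}, z^{+}$ joined by the arc $(z^-,z^+)$, and let each arc $(a,b)$ of $\idg(F)$ become $(a^+,b^-)$. Skew-symmetry is defined by $\ol{z^-} := \ol z^{+}$, so the contraposition of $(z^-,z^+)$ is $(\ol z^-, \ol z^+)$. A path using both $z$ and $\ol z$ then uses a splitting arc together with its mate, so regular paths $x \leadsto \ol y$ correspond exactly to arc-regular paths $x^- \leadsto \ol y^{+}$. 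Unit arcs $(\ol z, z)$ can be deleted beforehand, because they never lie on a regular path. Goldberg--Karzanov's algorithm finds such a path (and a shortest one) in time linear in the size of the split graph, which is $O(\ell(F))$. Reading off $\Cl(E(P)) \cup \set{\set x,\set y}$ then gives the MUS.

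For the ``shortest'' claim, I would use the structure of Family Ib. For a regular path $P$ of length $L$ from $x$ to $\ol y$, the MUS $F(P)$ has $c = L+2$ and $n = L+1$ (its variables are those of $V(P)$, which are $L+1$ distinct variables because $P$ is regular). It also has $\ell = 2L+2$. All three measures are strictly increasing functions of $L$, so minimising any of them means minimising $\length(P)$, which is exactly a shortest regular path. Case Ia ($L=0$) fits the same formulas. The only delicate step is making the splitting faithful for shortest paths: with splitting arcs, the length in the split graph is $2L+1$, which is still monotone in $L$, so shortest paths are preserved.
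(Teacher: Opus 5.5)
Your proposal is correct and takes essentially the route the paper intends. The paper gives no separate proof; the equivalence comes from the bijection of \cref{thm:twounits}, linear time from the vertex-splitting reduction to the arc-regular path algorithm of \cite{GoldbergKarzanov1996SkewSymmetry} mentioned in the introduction, and the equivalence of minimising $n(F'), c(F'), \ell(F')$ from the chain structure of Family Ib (with $c = \length(P)+2$ because a regular path never contains an arc together with its contraposition), and your explicit skew-symmetry on the split graph and the removal of unit arcs fill in details the paper leaves implicit.
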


By running through all pairs of unit-clauses we obtain:
\begin{corollary}\label{cor:twounits2}
  Whether $F \in \Bclss$ has an MUS containing exactly two unit-clauses (that is, an MUS belonging to Family I) can be decided in time $O(u(F)^2 \ell(F))$, and thus in cubic time in the input (and the MUS extracted in the positive case).
  Also finding a shortest such MUS can be done with the same time bounds.
\end{corollary}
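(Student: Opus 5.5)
The plan is to reduce everything to \cref{cor:twounits} by enumerating unordered pairs of distinct unit-clauses of $F$. First I collect the unit-clauses $\set{x_1}, \dots, \set{x_u}$ with $u = u(F)$ in one linear scan. Every Family-I MUS $F'$ has $u(F') = 2$, by the characterisation in Subsection~\ref{sec:fourfamilies}. So $F' \in \mus_{\set{x_i},\set{x_j}}(F)$ for some $i < j$. Conversely, every element of $\mus_{\set{x_i},\set{x_j}}(F)$ with $i \ne j$ contains two unit-clauses and is therefore in $\FI$. Hence $\mus(F) \cap \FI = \bigcup_{i<j} \mus_{\set{x_i},\set{x_j}}(F)$.

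The degenerate subcase Ia, where $x_j = \ol{x_i}$, needs care. By \cref{thm:twounits} the MUS then corresponds to the trivial regular path of length $0$ from $x_i$ to $\ol{x_j} = x_i$. The Goldberg--Karzanov routine from \cref{cor:twounits} handles this uniformly, or one can simply check for a complementary pair of unit-clauses in linear time. Otherwise, for each of the $\binom{u}{2}$ pairs I build $\idg(F)$ once, in time $O(\ell(F))$, and run the linear-time regular-path algorithm of \cref{cor:twounits} from $x_i$ to $\ol{x_j}$. I answer ``yes'' as soon as some pair succeeds, and output $F(P)$ via the bijection of \cref{thm:twounits}.

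This gives $O(u(F)^2 \ell(F))$ total time. Since $u(F) \le c(F) \le \ell(F)$, this is cubic in the input size.

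For the shortest version, I run the shortest-regular-path variant from \cref{cor:twounits} for every pair and keep the overall minimum. This is valid because $c(F(P)) = \length(P) + 2$ and $n(F(P)) = \length(P) + 1$, and $\ell(F(P))$ is also a strictly increasing function of $\length(P)$, since all path arcs are binary clauses by regularity. So the three minimisation criteria coincide across pairs as well, not only within a single pair.

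The only point needing real attention is this cross-pair equivalence of the length measures, together with the Ia case. Both follow directly from the explicit shape of Family I, so I expect no substantial obstacle.
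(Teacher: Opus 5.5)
Your proposal is correct and follows the paper's own argument. The paper simply runs through all pairs of unit-clauses and applies \cref{cor:twounits} to each pair, at linear cost per pair. Your extra checks are accurate details the paper leaves implicit: the length-$0$ regular path for Subcase Ia, and the fact that $n$, $c$ and $\ell$ of $F(P)$ are all strictly increasing in $\length(P)$, so minimising across pairs is consistent.
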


\subsection{One unit-clause}
\label{sec:oneunit}

Via the nearly regular paths starting with $x$ (denoted by $\mf U_x$; recall \cref{def:nearlyreg}), we get an efficient handle on MUSs containing $\set{x}$ (though we need to take duplication into account).

We use here the following simple properties of paths $P$ from $x$ to $\ol x$ (always in a given implication digraph):
First we note $\length(P) \ge 1$, with $\length(P) = 1$ iff $P$ just consists of a unit, corresponding to the unit-clause $\ol x$.
The important trivial observation is now that the contraposition $\ol P$ is also a path from $x$ to $\ol x$, where $P = \ol P$ iff $\length(P) = 1$.

\begin{theorem}\label{thm:oneunit}
  Consider $F \in \Bclss$ and $\set{x} \in F$.
  Define the map $P \in \mf U_x(\idg(F)) \mapsto F(P) := \Cl(E(P)) \cup \set{\set{x}} \sse F$.
  \begin{enumerate}
  \item\label{thm:oneunit1} The map $P \mapsto F(P)$ is a surjection from $\mf U_x$ to $\mus_{\set{x}}(F)$.
  \item\label{thm:oneuni2} Recall that the preimage $F^{-1}(F')$ is the set of all $P \in \mf U_x$ with $F(P) = F'$; we will just speak here of ``the preimage''.
  \item\label{thm:oneunit3} The preimage of $F' \in \mus_{\set{x}}(F)$ is unique iff $F'$ contains exactly two unit-clauses (i.e., belongs to Family I).
    Those $F'$ are exactly the $F(P)$ for those $P \in \mf U_x$, where the final arc in $P$ is unit (and thus corresponds to the second unit-clause $\set{y}$; if we remove that final arc, then we are exactly in the situation of \cref{thm:twounits}, obtaining an element of $R_{x, \ol y}$).
  \item\label{thm:oneunit4} For all $F' \in \mus_{\set{x}}(F)$ containing exactly one unit-clause (namely $\set{x} \in F'$), there are exactly two preimages:
    \begin{enumerate}
    \item\label{thm:oneunit4a} Consider some $P \in \mf U_x$ with $F(P) = F'$.
    \item\label{thm:oneunit4b} If $\plast(P) = \ol x$ (note $\length(P) \ge 2$), then $F'$ belongs to Family IIa, and we have $\ol P \ne P$ with $\ol P \in \mf U_x$ and $F(P) = F(\ol P)$, while the preimage of $F'$ is $\set{P,\ol P}$.
    \item\label{thm:oneunit4c} Otherwise $F'$ belongs to Family IIb.
      Consider the regular decomposition $P_0;P_1 = P$ (\cref{def:nearlyreg}).
      Let $P' := P_0;\ol{P_1}$.
      We have $P' \ne P$ and the preimage of $F'$ is $\set{P, P'}$.
    \end{enumerate}
  \end{enumerate}
\end{theorem}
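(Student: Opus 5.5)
We argue through the classification of Subsection~\ref{sec:fourfamilies}, as in the proof of \cref{thm:twounits}. Every $F' \in \mus_{\set{x}}(F)$ has $u(F') \ge 1$, so $\delta(F')=1$ and $F'$ lies in Family I or II. First we show that $F(P)$ is such an MUS for each $P \in \mf U_x(\idg(F))$. Take the regular decomposition $P = P_0;P_1$ with $\pfirst(P_1)=z$ and $\plast(P_1)=\ol z$. The vertices of $P_0$ together with the inner vertices of $P_1$ form a clashfree set, and $z$ occurs only once in it. We distinguish three cases.
\begin{itemize}
\item If the last arc of $P$ is a unit $(\ol y, y)$, then $F(P)$ has the form Ib (or Ia when the path is just the unit $(x,\ol x)$).
\item If $z = x$ (so $P_0$ is trivial) and the path has length at least $2$, then $F(P)$ has the form IIa.
\item Otherwise $F(P)$ has the form IIb: the chain $P_0$ runs from $x$ to $z$, followed by the loop $P_1$ from $z$ to $\ol z$.
\end{itemize}
Regularity guarantees that the inner variables are distinct, which is exactly the side condition in the family descriptions. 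Conversely, each Family I/II element containing $\set{x}$ is read off as such a chain starting at $x$. We take the arcs in the direction beginning at $x$; in IIb we may reverse the loop part, since it is symmetric under contraposition. This gives surjectivity, Part~\ref{thm:oneunit1}.

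For the preimages, we first note that $F(P)$ determines the arc set of $P$ only up to contraposition. Each binary clause of $F(P)$ yields exactly one of its two arcs in $P$. We then reconstruct $P$ greedily from $x$. At each vertex, the clauses of $F(P)$ containing the complement of the current literal give the possible next arcs. Since literal degrees in $F'$ are at most $2$, a choice arises only at the unique degree-3 literal position, and this is what we have to analyse. The unit $\set{x}$ only generates the arc $(\ol x, x)$, which cannot occur in a path starting at $x$, so every $P$ begins at $x$ with a binary-clause arc.
\begin{itemize}
\item \textbf{Family I.} Suppose two paths yield the same $F'$. The induction argument from the proof of \cref{thm:twounits}, applied to the regular part, shows that they coincide. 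The final unit arc is forced, because it is the only arc into the complementary endpoint.
\item \textbf{Family IIa.} The path must end at $\ol x$. At $x$ there are two outgoing binary arcs, and these are exactly the first arcs of $P$ and of $\ol P$. After that first choice the walk is forced. Hence the preimage is $\set{P, \ol P}$, and $P \ne \ol P$ because $\length(P)\ge 2$.
\item \textbf{Family IIb.} The initial segment from $x$ to $z$ is forced, because on it each literal has exactly one continuation within $F'$. At $z$ there are two options, which traverse the loop in the two directions $P_1$ and $\ol{P_1}$. This gives exactly $\set{P, P_0;\ol{P_1}}$. Both are nearly regular, since $\ol{P_1}$ uses the complements of the inner vertices of $P_1$ and ends at $\ol z$. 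The two paths are distinct because $\length(P_1)\ge 2$: the loop contains no unit, as $F'$ has only the one unit-clause.
\end{itemize}
Part~\ref{thm:oneunit3} then follows: exactly two preimages occur in Family II, and exactly one in Family I, where the last arc is a unit. Removing that final unit arc gives an element of $\mf R_{x,\ol y}$.

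The main obstacle is the forcing argument in Family IIb. We must show that $\ol{P_1}$ contains no complement of a vertex of $P_0$. We also need that no other path from $x$ uses the same clauses, for instance one that starts by entering the loop through a contraposed arc of $P_0$. Our plan is to use the degree counts: $\var_3(F') = \set{\var(z)}$, and all other variables are 1-singular. From this, every literal except $\ol z$ has at most one outgoing arc in the subdigraph of $\idg(F')$ determined by $F'$ that avoids $\ol x$. We then argue that a nearly regular path from $x$ can never pass through $\ol{P_0}$, since doing so would require visiting $\ol z$ before closing the clash.
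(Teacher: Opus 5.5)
Your proof is correct and follows essentially the paper's route. Surjectivity comes from the Family I/II classification. The preimage count comes from forcing the path arc by arc from $x$, which is the paper's ``induction from the first arc'' as in \cref{thm:twounits}, and the only branching occurs at the degree-$3$ variable.

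There is one slip in your last paragraph. In IIb the vertex with two out-arcs is $z$, not $\ol z$: the out-arcs of a vertex $v$ come from the occurrences of $\ol v$, and $\ol z$ is the literal occurring twice. Your own IIb bullet states this correctly. With that corrected, the ``main obstacle'' is immediate:
\begin{enumerate}
\item $\ldeg_{F'}(\ol x)=1$ forces the start along $P_0$.
\item The vertices of $\ol{P_0}$ can only be entered through $\ol z$, and reaching $\ol z$ ends the path.
\item Near-regularity of $P_0;\ol{P_1}$ follows directly from the clashfreeness of $V(P)\sm\set{\ol z}$.
\end{enumerate}
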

\begin{proof}
The characterisations of Families I, II in Subsection \ref{sec:fourfamilies} show that the MUSs containing $\set{x}$ are covered exactly by the nearly regular paths $P \in \mf U_x(\idg(F))$ via the mapping $P \mapsto F(P)$; this shows Part \ref{thm:oneunit1}.
The next assertion, Part \ref{thm:oneunit3}, follows from Theorem \ref{thm:twounits} (and the assertion in Part \ref{thm:oneunit4}, that there we have always at least two different preimages).
It remains to show Part \ref{thm:oneunit4}.
For Case \ref{thm:oneunit4b} by the same argument as in Theorem \ref{thm:twounits} it follows that the only other preimage is $\ol P$.
Finally, for Case \ref{thm:oneunit4c}, the first arc of $P_0$ is unique, since $\ol x$ can't be on the path (it would constitute Family IIa, which is of a different isomorphism type).
Then it follows by induction that the other arcs of $P_0$ are unique.
And that the only alternative to $P_1$ is $\ol{P_1}$ follows again as in Theorem \ref{thm:twounits}.
\end{proof}

\begin{corollary}\label{cor:oneunit}
  $F \in \Bclss$ has a MUS containing the unit-clause $\set{x} \in F$ iff there is a path in the implication digraph of $F$ from $x$ to $\ol x$.
  This condition can be checked in linear time (and the MUS extracted in the positive case).
\end{corollary}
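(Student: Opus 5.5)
The plan is to reduce the corollary to Theorem \ref{thm:oneunit}, Part \ref{thm:oneunit1}, which says that $\mus_{\set{x}}(F)$ is exactly the image of $\mf U_x(\idg(F))$ under $P \mapsto F(P)$. Hence $\mus_{\set{x}}(F) \ne \es$ iff $\mf U_x(\idg(F)) \ne \es$. It then remains to show that $\mf U_x(\idg(F)) \ne \es$ iff there is a path from $x$ to $\ol x$ in $\idg(F)$, and that such a path can be found and turned into a nearly regular one in linear time.

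\textbf{Nearly regular implies a path to $\ol x$.} Let $P \in \mf U_x$ with regular decomposition $P_0;P_1$, and write $z := \pfirst(P_1)$, so that $\plast(P_1) = \ol z$.
\begin{itemize}
\item If $P_0$ is empty, then $z = x$ and $P$ itself runs from $x$ to $\ol x$.
\item Otherwise $P_0$ runs from $x$ to $z$ and $P_1$ from $z$ to $\ol z$. By contraposition $\ol{P_0}$ runs from $\ol z$ to $\ol x$, so $P_0;P_1;\ol{P_0}$ is a walk from $x$ to $\ol x$, which contains a path from $x$ to $\ol x$.
\end{itemize}

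\textbf{A path to $\ol x$ yields a nearly regular path.} Let $Q$ be a path from $x$ to $\ol x$. Take the shortest prefix $Q'$ of $Q$ whose vertex set contains a clash. Such a prefix exists because $V(Q) \ni x, \ol x$. By minimality, the only clash of $Q'$ involves its last vertex $w$ and some earlier vertex $\ol w$, so $Q'$ minus its last vertex is regular. Hence $Q' \in \mf U_x$, with $P_1$ the segment from $\ol w$ to $w$. Note that $Q'$ starts at $x$ as required, since every prefix does.

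\textbf{Algorithm.} We build $\idg(F)$ in linear time and run a BFS/DFS from $x$.
\begin{itemize}
\item If $\ol x$ is unreachable, we answer negatively.
\item Otherwise we extract the BFS-tree path $Q$ from $x$ to $\ol x$. While walking along $Q$ from $x$, we mark the visited literals and stop at the first vertex whose complement is already marked. This costs $O(\ell(F))$ and yields $P = Q' \in \mf U_x$.
\item We output $F(P) = \Cl(E(P)) \cup \set{\set{x}}$, which is an MUS by Theorem \ref{thm:oneunit}, in linear time.
\end{itemize}

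The only step needing real care is the second: the reachability witness must be truncated at its first clash to obtain an element of $\mf U_x$. Since Theorem \ref{thm:oneunit} already certifies that every element of $\mf U_x$ maps to an MUS, no separate unsatisfiability or minimality check is required.
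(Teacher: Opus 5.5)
Your proof is correct and follows the paper's argument. Both directions go through the surjection of Theorem~\ref{thm:oneunit}, Part~\ref{thm:oneunit1}: you append the contraposition $\ol{P_0}$ to get from $x$ to $\ol x$, and you truncate a path from $x$ to $\ol x$ at its first clash to get an element of $\mf U_x$. The only inessential difference is that the paper asserts $P_0;P_1;\ol{P_0}$ is itself a path (true, since $V(P)\sm\set{\ol z}$ is clashfree), whereas you settle for a walk containing a path, which suffices.
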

\begin{proof}
If there is $F' \in \mus_{\set{x}}(F)$, then there is a path $P \in \mf U_x$ with $F(P) = F'$.
If $\plast(P) \ne \ol x$, then $P$ is the concatenation of a path $P_0$ from $x$ to $y$ and of a path $P_1$ from $y$ to $\ol y$: the concatenation $P;\ol{P_0}$ is a path from $x$ to $\ol x$.
In the other direction, if there is a path from $x$ to $\ol x$, taking the initial subpath until the first clash yields an element of $\mf U_x$.
\end{proof}

Just finding (exactly) an element of Family IIa is achieved by searching for a regular path from $x$ to $y$ (thus $\var(y) \ne \var(x)$) for such $y$ with $(y,\ol x) \in E(\idg(F))$ (that is, $\set{\ol y, \ol x} \in F$).

\begin{question}\label{que:oneunit}
  Can we efficiently find a shortest MUS using $\set{x}$?
  Different from Corollary \ref{cor:twounits}, that seems difficult here, due to the open-endedness of the set of paths $\mf U_x$.
\end{question}

If in Corollary \ref{cor:oneunit} we want to test for an MUS containing exactly one unit-clause (namely $\set{x}$), that is, for an element of Family II, then we can achieve this by removing all other unit-clauses from $F$ as preprocessing.
On the other hand, by running through all unit-clauses we obtain:
\begin{corollary}\label{cor:oneunit2}
  Whether $F \in \Bclss$ has an MUS containing at least one unit-clause (that is, an MUS belonging to Family I or II) can be decided in time $O(u(F) \ell(F))$, and thus in quadratic time in the input (and the MUS extracted in the positive case).
  The same holds for deciding (and finding) for an MUS with exactly one unit-clause (Family II).
\end{corollary}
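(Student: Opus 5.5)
The plan is to reduce both assertions to \cref{cor:oneunit}, applied once per unit-clause of $F$. First compute $\idg(F)$ and the list $\set{x_1}, \dots, \set{x_{u(F)}}$ of unit-clauses of $F$, both in linear time. By \cref{sec:review2MU}, every $F' \in \mus(F)$ has $u(F') \le 2$, and $u(F') \ge 1$ holds iff $F'$ belongs to Family I or II. So $F$ has an MUS with at least one unit-clause iff $\mus_{\set{x_i}}(F) \ne \es$ for some $i$. For each $i$ we invoke \cref{cor:oneunit}: a linear-time search in $\idg(F)$ for a path from $x_i$ to $\ol{x_i}$. This gives $u(F)$ searches, each costing $O(\ell(F))$, so the total is $O(u(F)\,\ell(F))$. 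Since $u(F) \le c(F) \le \ell(F)$, this is at most quadratic.

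For extraction in the positive case I follow the proof of \cref{cor:oneunit}. From a path from $x_i$ to $\ol{x_i}$, take the initial subpath $P$ up to the first vertex clashing with an earlier vertex. Then $P \in \mf U_{x_i}$, and by \cref{thm:oneunit}, Part~\ref{thm:oneunit1}, $F(P) = \Cl(E(P)) \cup \set{\set{x_i}} \in \mus_{\set{x_i}}(F)$. Computing the first clash and forming $F(P)$ is linear, using a marker array on variables.

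For Family II, I apply the suggested preprocessing. For each $i$, let $F_i := F \sm \set{\set{x_j} : j \ne i}$ and run the same test on $F_i$ with the unit-clause $\set{x_i}$. Any $F' \in \mus_{\set{x_i}}(F_i)$ has $u(F') = 1$, so $F'$ belongs to Family II. Conversely, every Family-II MUS of $F$ contains exactly one unit-clause $\set{x_i}$, and it is then an MUS of $F_i$. Building $\idg(F_i)$ only means omitting the unit arcs $(\ol{x_j}, x_j)$, which is linear per $i$, so the same $O(u(F)\,\ell(F))$ bound holds.

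The only step needing care is the Family II case. \cref{cor:oneunit} itself does not separate Families I and II, since a path in $\mf U_{x_i}$ may end with a unit arc. The preprocessing handles this, because removing the other unit-clauses removes exactly those final unit arcs. Everything else is bookkeeping.
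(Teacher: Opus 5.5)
Your proposal is correct and is essentially the paper's own argument. The paper likewise applies Corollary~\ref{cor:oneunit} once per unit-clause to get the $O(u(F)\,\ell(F))$ bound, extracting the MUS via the first-clash prefix in $\mf U_{x}$ and Theorem~\ref{thm:oneunit}, and handles Family~II by deleting all other unit-clauses before each run; you also make explicit the checks the paper leaves implicit, namely that deleting those unit-clauses loses no Family-II MUS and admits no Family-I one.
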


\section{Enumeration of MUSs with a unit-clause}
\label{Sec:def1MUSpolydelay}

Based on \cref{thm:oneunit}, we can efficiently enumerate the MUSs using at least one unit-clause.
The main procedure is
\begin{displaymath}
  \enum(F \in \Bclss, \set{x} \in F, L \in \linord(\lit(F))),
\end{displaymath}
where for a set $X$ by $\linord(X)$ we denote the set of linear orders of $X$.
The order $L \in \linord(\lit(F))$ on the literals of $F$ creates first a linear order on the paths of $\idg(F)$, and then a linear order on $\bc_{\set{x} \in F} \mus_{\set{x}}(F)$, the set of all MUSs of $F$ using at least one unit-clause, as follows:

\begin{definition}\label{def:pathlex}
  Consider $F \in \Bclss$ and $L \in \linord(\lit(F))$.
  \begin{enumerate}
  \item For paths $P, P'$ we use lexicographic order based on $L$, that is, $P <_L P'$ if, considering paths as words over $\lit(F)$, for the first index $i$ from the left where $P$ differs from $P'$, we have that $P_i$ is before $P_i'$ w.r.t.\ $L$.
    This yields a strict linear order on all paths.
  \item\label{def:pathlex2} For $F \in \bc_{\set{x} \in F} \mus_{\set{x}}(F)$ there are at most two (different) literals $x, x'$ with $F \in \mus_{\set{x}}(F) \cap \mus_{\set{x'}}(F)$, and then we let $x(F)$ be the smaller one w.r.t.\ $L$.
  \item Furthermore by \cref{thm:oneunit} there are at most two paths $P, P' \in \mf U_{x(F)}$ in the preimage, and then we let $P(F)$ be the smaller one w.r.t.\ $<_L$.
  \item Finally for $F, F' \in \bc_{\set{x} \in F} \mus_{\set{x}}(F)$ we let $F <_L F'$ iff $P(F) <_L P(F')$.
  \end{enumerate}
  We call the order $<_L$ on paths resp.\ MUSs with a unit-clause the \textbf{$L$-pathlex order}.
\end{definition}

\noindent
Now the $\enum(F,x,L)$-procedure enumerates $\mus_{\set{x}}(F)$ in $L$-pathlex-order, by a recursively repeated (and guarded) DFS-transversal of $\idg(F)$ (inside DFS the output actually happens):
\begin{algorithm}[H]
  \caption{Enumeration of $\mus_{\set{x}}(F)$ in $L$-pathlex order}
  \label{alg:enum}
  \begin{algorithmic}[1]
    \Procedure{enum}{$F \in \Bclss, \set{x} \in F, L \in \linord(\lit(F)$}
      \Comment{global variables: $x, L$}
      \State $G \gets \idg(F)$ \Comment{global variable: $G$}
      \If{$\reach{x}{\ol x}{G}$}
        \label{line:initialcheck}
        \State $\Call{DFS}{(\es,\es), x}$
        \label{line:initialDFS}
      \EndIf
    \EndProcedure
  \end{algorithmic}
\end{algorithm}

Note that in order to enumerate only those MUSs using no other unit-clause than $\set{x}$, one just needs to remove all other unit-clauses from $F$.
Now the recursive procedure, where the actual work is done, is as follows (recall $\mf U_x$ from \cref{thm:oneunit}):
\begin{algorithm}[H]
  \caption{DFS-procedure for enumerating $P \in \mf U_x(G)$ in $L$-pathlex order, delivered to \texttt{OUTPUT}}
  \label{alg:DFS}
  \begin{algorithmic}[1]
    \Procedure{DFS}{$P,y$}
      \State $P \gets P;y$
      \label{line:extP}
      \State $V \gets V(P)$ \Comment{just an abbreviation}
      \If{$V$ contains a clash}
        \State $\Call{output}{P}$
        \label{line:calloutput}
      \Else
        \State $R \gets$ set of out-neighbours $z$ of $y$ in $G - V$ with $\reach{z}{\ol x}{G - V}$.
        \label{line:computeR}
        \LComment{The computation of $R$ can happen by a single run of BFS with start vertex $\ol x$ in the reversed digraph $\trans{(G - V(P))}$.}
        \For{each vertex $z \in R$, in order $L$}
          \label{line:zinR}
          \State $\Call{DFS}{z,P}$
          \label{line:recursiveDFS}
        \EndFor
      \EndIf
    \EndProcedure
  \end{algorithmic}
\end{algorithm}

Directly by code-inspection we can prove:
\begin{lemma}\label{lem:invariantDFS}
  All calls to procedure \textsc{DFS}$(P,y)$ (given in Algorithm \ref{alg:DFS}), namely the initial call in Algorithm \ref{alg:enum}, Line \ref{line:initialDFS}, and the recursive call in Algorithm \ref{alg:DFS}, Line \ref{line:recursiveDFS}, fulfil the following invariants on the two arguments $P, y$:
  \begin{enumerate}
  \item $P$ is a clashfree subdigraph of $G$, which is either empty or a path.
  \item $y \in V(G) \sm V(P)$, such that:
    \begin{itemize}
    \item if $P$ is not empty: $(\plast(P), y) \in E(G)$;
    \item $\reach{y}{\ol x}{G - V(P)}$.
    \end{itemize}
  \end{enumerate}
  Thus for $P' := P;y$ (corresponding to Line \algref{alg:DFS}{line:extP}) there exists $P'' \in \mf U_x(G)$ extending $P'$ (possibly $P'' = P'$).
\end{lemma}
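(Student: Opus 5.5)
The plan is an induction over the call tree. We verify that the initial call in Algorithm \ref{alg:enum}, Line \ref{line:initialDFS}, satisfies the invariants, and that every recursive call in Line \ref{line:recursiveDFS} issued from a call satisfying them satisfies them as well. The final sentence, existence of an extension $P'' \in \mf U_x(G)$, is derived from the invariants separately. One notational remark: the recursive call is written \textsc{DFS}$(z,P)$, and we read it as \textsc{DFS}$(P,z)$, with $P$ the already extended path from Line \ref{line:extP}.

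\emph{Base case.} The initial call has $P = (\es,\es)$, the empty digraph, which is trivially clashfree, and $y = x$. Since $\set{x} \in F$, we have $x \in \lit(F) = V(G)$, and $x \notin V(P) = \es$. The condition on $\plast(P)$ is vacuous. The reachability $\reach{x}{\ol x}{G - \es}$ is exactly the guard in Line \ref{line:initialcheck}.

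\emph{Inductive step.} Assume the call \textsc{DFS}$(P,y)$ satisfies the invariants. By invariant 2 (first bullet), $P' := P;y$ is a path: $y \notin V(P)$, and either $P$ is empty or there is an arc from $\plast(P)$ to $y$. A recursive call happens only in the else-branch, so $V(P')$ is clashfree; this gives invariant 1 for the new first argument $P'$. The new second argument $z$ lies in $R$, so by Line \ref{line:computeR} it is an out-neighbour of $y = \plast(P')$ in $G - V(P')$. Hence $z \in V(G) \sm V(P')$, $(\plast(P'), z) \in E(G)$, and $\reach{z}{\ol x}{G - V(P')}$, which is invariant 2.

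\emph{Existence of $P''$.} Given invariant 2, choose a path $Q$ from $y$ to $\ol x$ inside $G - V(P)$. Then $P;Q$ is a path in $G$ starting at $\pfirst(P) = x$ (or at $y = x$ if $P$ is empty). This needs $\pfirst(P) = x$, so we strengthen invariant 1 by the easy additional clause ``if $P$ is nonempty then $\pfirst(P) = x$'', which follows from the same induction. The path $P;Q$ ends in $\ol x$ and starts in $x$, so its vertex set contains a clash. Walk along $P;Q$ and let $P''$ be the shortest initial segment whose vertex set contains a clash. By minimality, $P''$ minus its last vertex is clashfree, and the last vertex adds a clash with exactly one earlier vertex (vertices of a path are distinct, so there is at most one complement). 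Therefore $P''$ is nearly regular, that is, $P'' \in \mf U_x(G)$.

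It remains to check that $P''$ extends $P'$. Since $V(P)$ is clashfree, the first clash appears at index $\ge \abs{V(P)}+1$, i.e.\ at $y$ or later, so $P''$ extends $P' = P;y$. The case $P'' = P'$ occurs exactly when $y$ already creates the clash. The only delicate point is this bookkeeping about the starting vertex and where the first clash can occur; the rest is routine code inspection.
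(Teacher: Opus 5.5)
Your proof is correct and is the code-inspection argument the paper intends; the paper gives no detail beyond that phrase. You argue by induction over the call tree, with the base case coming from the reachability guard in \textsc{enum} and the inductive step from the definition of $R$. Your added invariant ``$P$ nonempty $\Rightarrow \pfirst(P) = x$'' (with $y = x$ when $P$ is empty, since only the initial call has empty $P$) is a genuine improvement, because the stated invariants alone do not force $P''$ to start in $x$. One small notational fix: write the concatenation as $(P;y);Q$, since the paper's ``;'' requires $\plast(P) = \pfirst(Q)$.
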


Finally the actual output:
\begin{algorithm}[H]
  \caption{Output-procedure for a path $P \in \mf U_x$}
  \label{alg:output}
  \begin{algorithmic}[1]
    \Procedure{output}{$P$}
      \If{the last arc of $P$ is a unit}
        \State $\Call{print}{P}$ \Comment{Family I}
      \Else
        \State Let $P = P_0;P_1$ be the regular decomposition of $P$.
        \State $P' \gets P_0;\ol{P_1}$
        \If{$P <_L P'$}
          \label{line:ifplspp}
          \State $\Call{print}{P}$ \Comment{Family II}
        \EndIf
      \EndIf
      \EndProcedure
  \end{algorithmic}
  \begin{algorithmic}[1]
    \Procedure{print}{$P$}
      \State print CNF-header: $n = \length(P)$, $c = \length(P) + 1$
      \State print $\set{x}$
      \For{$(y,z) \in E(P)$}
        \If{$(y,z)$ is unit}
          \State print $\set{z}$ \Comment{only for Family I}
        \Else
          \State print $\set{\ol y,z}$
        \EndIf
      \EndFor
    \EndProcedure
  \end{algorithmic}
\end{algorithm}

\begin{theorem}\label{thm:correctnessenumwithunit}
  \cref{alg:enum}, implementing $\enum(F, \set{x}, L)$, indeed enumerates the set $\mus_{\set{x}}(F)$ in $L$-pathlex order.
\end{theorem}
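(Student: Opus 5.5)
We split the proof into three claims: (a) the sequence of paths handed to \textsc{output} is exactly $\mf U_x(G)$, each path exactly once, in $<_L$-increasing order; (b) \textsc{output} prints exactly one path from each preimage $F^{-1}(F')$, $F' \in \mus_{\set{x}}(F)$, namely $P(F')$ as in \cref{def:pathlex}; (c) hence the printed MUSs appear in $L$-pathlex order without repetition. Note that with $x$ fixed, $x(F')$ may differ from $x$ for Family-I MUSs. So we read the order inside $\mus_{\set{x}}(F)$ as the order of the chosen preimage in $\mf U_x$; this is consistent because \textsc{print} outputs the clause-set $F(P)$.

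For (a) we argue on the recursion tree. By \cref{lem:invariantDFS}, every call \textsc{DFS}$(P,y)$ has a prefix $P;y$ of some element of $\mf U_x(G)$. Hence every path reaching Line \algref{alg:DFS}{line:calloutput} is clashfree except for its last vertex, so it is nearly regular and lies in $\mf U_x$; the clash check happens exactly when the first clash appears. Conversely, take $P'' \in \mf U_x$ and induct over its prefixes. If a proper prefix $Q$ is regular, the next vertex $z$ of $P''$ is an out-neighbour of $\plast(Q)$ outside $V(Q)$. The remainder of $P''$ from $z$ to its end avoids $V(Q)$. If $\plast(P'') = \ol x$, this gives $\reach{z}{\ol x}{G - V(Q)}$ directly. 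Otherwise, with the regular decomposition $P''=P_0;P_1$, we reach $\ol x$ by appending the contraposition of the part of $P_0$ up to $\pfirst(P_1)$, as in the proof of \cref{cor:oneunit}.

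The main obstacle is this second case, because that contraposed tail must avoid $V(Q)$. Since $P''$ is nearly regular, the complements of $V(P_0)$ do not occur in $Q$, except possibly $\ol{x}$ only at the very end. The literals of $\ol{P_0}$ are complements of vertices of $P_0$, hence not in $V(Q) \sse V(P_0)$ by regularity of $P_0$. If the concatenation is only a walk, we shorten it to a path inside $G-V(Q)$. The initial check in Line \ref{line:initialcheck} covers the base case $Q=\es$. So $z\in R$ and the branch is taken. The recursion tree is therefore the trie of $\mf U_x$, and children are visited in order $L$. Distinct leaves are incomparable as prefixes, since no element of $\mf U_x$ properly extends another: a path stops at its first clash. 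So the DFS visits the leaves in lexicographic order $<_L$, each once.

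For (b) we use \cref{thm:oneunit}. If the last arc is a unit, the preimage is a singleton, so printing is correct and $P=P(F')$. Otherwise the preimage is $\set{P,P'}$ with $P'=P_0;\ol{P_1}$; this covers Family IIa too, where $P_0$ is empty and $P'=\ol P$. The test in Line \algref{alg:output}{line:ifplspp} prints exactly the $<_L$-smaller of the two. The construction $P\mapsto P'$ is an involution on these pairs, so exactly one member of each pair is printed. Combining (a) and (b), the printed sequence is $(F(P(F')))$ ordered by $<_L$ on $P(F')$, which is the $L$-pathlex order on $\mus_{\set{x}}(F)$, and surjectivity in \cref{thm:oneunit} Part~\ref{thm:oneunit1} gives completeness.
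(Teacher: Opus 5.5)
Your proposal is correct and follows the same route as the paper's proof. Soundness of the paths reaching \textsc{output} comes from \cref{lem:invariantDFS}, completeness comes by induction (the reachability test defining $R$ only prunes branches that cannot extend to an element of $\mf U_x$), the $L$-pathlex order comes from visiting out-neighbours in order $L$, and correct printing comes from \cref{thm:oneunit}. You make explicit what the paper leaves implicit: the contraposition argument for the pruning test, prefix-freeness of $\mf U_x$, and the involution $P \mapsto P_0;\ol{P_1}$. One small slip is that $V(Q) \sse V(P_0)$ fails when the prefix $Q$ runs into $P_1$; the disjointness you need, $\ol{V(P_0)} \cap V(Q) = \es$, still holds because both $V(P_0)$ and $V(Q)$ lie in the clashfree set $V(P'') \sm \set{\plast(P'')}$, as your preceding sentence essentially says.
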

\begin{proof}
First we need to show that in the DFS-Algorithm \ref{alg:DFS}, for the call \textsc{output}$(P)$ (Line \ref{line:calloutput}) the paths $P$ are exactly the elements of $\mf U_x$, and they appear in $L$-pathlex order.
That $P$ is always an element of $\mf U_x$ follows with Lemma \ref{lem:invariantDFS}.
And that we do not miss any path follows easily by induction: starting with $x$ (Line \algref{alg:enum}{line:initialDFS}), and then always considering all out-neighbours $z$ (Line \algref{alg:DFS}{line:computeR}), only using the necessary condition of being able to reach $\ol x$ (without using already used vertices) to cut off branches.
Finally we follow the $L$-pathlex order due to running through all out-neighbours following the order $L$ (Line \algref{alg:DFS}{line:zinR}).

It remains to show that the output of the MUSs corresponding to the paths $P$, by \textsc{output} and \textsc{print} in Algorithm \ref{alg:output}, are done correctly, and this follows immediately by the detailed statements of \cref{thm:oneunit}.
\end{proof}

For the enumeration of the underlying paths we have polynomial delay, but for the enumeration of the MUSs only incremental polynomial time, since there might be long runs of ``silent paths'' (the negative case, $P >_L P'$, of the test in \textsc{output}, Line \algref{alg:output}{line:ifplspp}):
\begin{theorem}\label{thm:runtimeenumwith}
  \cref{alg:enum}, enumerating $P \in \mf U_x(\idg(F))$ in $L$-pathlex order as delivered to \textsc{output}$(P)$ in Line \algref{alg:DFS}{line:calloutput}, uses time $O(n(F) \ell(F))$ (quadratic time) for finding the first element and for finding each subsequent one, while finally determining that there are no further elements takes time $O(\ell(F))$.
  Space usage is $O(\ell(F))$ (note that paths are not stored).

  For the resulting enumeration of $F' \in \mus_{\set{x}}(F)$ in $L$-pathlex order, each $F'$ has one or two corresponding paths $P$ whose output is $F'$, where only the first path is printed, and the second path is ignored.
  Thus given that we have found $m \in \NNZ$ many MUSs $F'$ so far, finding the next $F'$ (or determining there is none) can be done in time $O((m+1) \cdot n(F) \ell(F))$.
\end{theorem}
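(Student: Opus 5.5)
The plan is to analyse the recursion tree of \textsc{DFS} as a backtracking search with a pruning oracle, and to show that it never enters a dead branch.

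\textbf{Cost per node.} Each call \textsc{DFS}$(P,y)$ does three things:
\begin{itemize}
\item it extends $P$;
\item it tests for a clash, which takes constant time using a boolean array marking the literals of $V(P)$, updated on entry and exit;
\item if there is no clash, it computes $R$ (Line \algref{alg:DFS}{line:computeR}) by one BFS from $\ol x$ in $\trans{(G-V(P))}$, which takes time $O(\ell(F))$, and then sorts the at most $\ell(F)$ candidates $z$ according to $L$.
\end{itemize}
We can avoid the sort by precomputing, once at the start, the adjacency lists of $G$ in $L$-order, in time $O(\ell(F))$ via bucket sort; scanning the out-neighbours of $y$ in that order then costs $O(\ell(F))$. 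So every call costs $O(\ell(F))$, excluding its recursive children. The \textsc{output} and \textsc{print} work on a path costs $O(n(F))$.

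\textbf{No dead branches.} The key fact is \cref{lem:invariantDFS}: every call extends to some element of $\mf U_x$. Hence every node of the recursion tree lies on a root-to-leaf branch that ends in a call to \textsc{output}. The recursion depth is at most $2n(F)+1$, since a clashfree path has at most $n(F)$ vertices plus the final clashing one.

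\textbf{Delay between consecutive outputs.} Between two consecutive \textsc{output} calls, DFS backtracks from the previous leaf up to the deepest ancestor that has an unexplored child $z\in R$. It then descends from that child to the next leaf. Every node it enters on the way down is guaranteed to have a nonempty $R$, or to be a leaf, so there is no wasted exploration. Backtracking costs $O(1)$ per level, provided each frame stores its current position in $R$. The descent visits $O(n(F))$ new nodes at cost $O(\ell(F))$ each. This yields delay $O(n(F)\ell(F))$, and the first output is bounded in the same way: the initial reachability test costs $O(\ell(F))$ and is followed by a single descent.

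\textbf{Termination and space.} After the last output, DFS only backtracks. Each pending frame has already computed its $R$ and merely exhausts its remaining pointer, so this phase takes $O(n(F))\subseteq O(\ell(F))$ time. If $\ol x$ is unreachable, the single reachability check takes $O(\ell(F))$. For space, each frame stores its set $R$. To keep the total at $O(\ell(F))$ rather than $O(n(F)\ell(F))$, I intend a charging argument: the sets $R$ along the current branch are out-neighbour sets of distinct vertices $y$, so their sizes sum to at most $\abs{E(G)}\le 2\ell(F)$. The BFS marks can be recomputed in a shared array. This space bound is the step I expect to need the most care, together with making sure the backtracking frames do not hold the full BFS state.

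\textbf{MUS enumeration.} By \cref{thm:oneunit} and \cref{thm:correctnessenumwithunit}, each $F'\in\mus_{\set{x}}(F)$ has at most two preimage paths, and exactly one of them is printed. So after $m$ MUSs have been printed, at most $2m$ paths have been delivered so far. The next MUS therefore appears within at most $m+1$ further path outputs: otherwise more than $2m$ silent paths would belong to only $m$ MUSs. Multiplying by the path delay gives the bound $O((m+1)\cdot n(F)\ell(F))$.
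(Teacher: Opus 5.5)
Your proposal is correct and follows essentially the same route as the paper. Lemma~\ref{lem:invariantDFS} guarantees that every call leads to an output, each call costs $O(\ell(F))$ over a recursion depth of $O(n(F))$, the final phase is pure backtracking (or just the initial reachability test), and the $(m+1)$ factor comes from each printed MUS having at most one silent sibling. Your presorted adjacency lists and the charging bound on the stored sets $R$ only make explicit what the paper calls ``ordinary DFS''. The one slip is in the last step: the contradiction should be ``more than $2m$ paths'' (printed plus silent) belonging to $m$ MUSs, not ``more than $2m$ silent paths''. That step relies on a silent path's printed sibling being $<_L$-smaller and hence already delivered.
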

\begin{proof}
First we consider the enumeration of the paths $P \in \mf U_x(\idg(F))$.
By Lemma \ref{lem:invariantDFS} we have that each (recursive) call of \textsc{DFS} results in at least one path.
Since the internal runtime of \textsc{DFS} is $O(\ell(F))$, and the maximal depth of a path is $2 n(F)$, we obtain the runtime $O(n(F) \ell(F))$ for finding the next path, while the final phase, determining that there is no further path, is either just backtracking until the root, or just the initial test in Line \algref{alg:enum}{line:initialcheck}, which just takes linear time.
Space usage is that of ordinary DFS (when not copying the path $P$, but implementing it via a global stack).

The worst delay of output of the next MUS is that for all $m$ previous outputs we encounter their (silent) siblings first, and thus we have at most $m+1$ invocations of the search for the next path.
\end{proof}

\begin{question}\label{que:polydelay}
  Can enumeration of $\mus_{\set{x}}(F)$ be done with polynomial delay?
  Once we know one of the underlying paths $P$, then we also know the sibling $P'$, but that comes ``after the fact'', and seems hard to exploit for such printing which avoids long runs of ``silent paths''.
\end{question}

\begin{question}\label{que:usingotherpathenums}
  We use an adaptation of the simplest backtracking algorithm, called \texttt{T-DFS} in \cite{PengEtal2019PathEnumeration}; more sophisticated algorithms don't seem to change the asymptotic bounds, but could be important for practical applications.
\end{question}

Finally, by running through all unit-clauses, we can enumerate all MUSs containing some unit-clause:
\begin{algorithm}[H]
  \caption{Enumeration of $\bc_{\set{x} \in F} \mus_{\set{x}}(F)$ in $L$-pathlex order}
  \label{alg:enumall}
  \begin{algorithmic}[1]
    \Procedure{enum}{$F \in \Bclss, L \in \linord(\lit(F))$}
      \For{$\set{x} \in F$, in order $L$}
        \label{line:enumxL}
        \State $\Call{enum}{F, \set{x}, L}$
        \State $F \gets F \sm \set{\set{x}}$
        \label{line:elimx}
      \EndFor
    \EndProcedure
  \end{algorithmic}
\end{algorithm}

\begin{theorem}\label{thm:correctnessenumwithoutunit}
  \cref{alg:enumall} enumerates $\bc_{\set{x} \in F} \mus_{\set{x}}(F)$ in $L$-pathlex order.
\end{theorem}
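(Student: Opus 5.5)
The plan is to reduce \cref{thm:correctnessenumwithoutunit} to \cref{thm:correctnessenumwithunit}, applied once per iteration of \cref{alg:enumall}, and to check two things: every MUS with a unit-clause is printed exactly once, and the printing follows $L$-pathlex order. Let $x_1 <_L x_2 <_L \dots <_L x_r$ be the literals with $\set{x_i} \in F$, and let $F_i := F \sm \set{\set{x_1},\dots,\set{x_{i-1}}}$ be the clause-set used in the $i$-th iteration. By \cref{thm:correctnessenumwithunit}, iteration $i$ enumerates $\mus_{\set{x_i}}(F_i)$ in $L$-pathlex order relative to $F_i$. Since removing clauses only removes arcs of the implication digraph, $\idg(F_i)$ is a subdigraph of $\idg(F)$ with the same paths except those using removed unit-arcs. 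Hence $\mus_{\set{x_i}}(F_i) = \set{F' \in \mus_{\set{x_i}}(F) : \set{x_j} \notin F' \text{ for } j<i}$.

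For exactly-once printing, take $F' \in \bc_{\set{x}\in F}\mus_{\set{x}}(F)$ and let $x(F')$ be as in \cref{def:pathlex}, Part \ref{def:pathlex2}, i.e.\ the $L$-smallest unit literal contained in $F'$ (there are at most two, by \cref{sec:review2MU}). If $x(F') = x_i$, then $F'$ contains no $\set{x_j}$ with $j<i$. So $F' \in \mus_{\set{x_i}}(F_i)$, and $F'$ is printed in iteration $i$. For $j>i$, $\set{x_i}$ has already been removed from $F_j$, so $F'$ is not printed again. For $j<i$, $F'$ does not contain $\set{x_j}$, so it is not printed earlier either. Within iteration $i$, \cref{thm:oneunit} and the test in \textsc{output} print it only once. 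One detail needs checking: the representative path printed within iteration $i$ must be the $<_L$-smaller preimage in $\mf U_{x_i}(\idg(F))$, matching $P(F')$. This holds because the preimage set of $F'$ is determined by $F'$ alone: it consists of paths in $\idg(F')\sse\idg(F_i)$. Therefore it is the same in $F_i$ as in $F$.

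For the ordering, within one iteration \cref{thm:correctnessenumwithunit} gives the order of the $P(F')$. Across iterations, every $F'$ printed in iteration $i$ has $P(F')$ starting with the vertex $x_i$. Since the $L$-pathlex order on paths compares first vertices first and $x_i <_L x_{i+1}$, all outputs of iteration $i$ precede those of later iterations. The main obstacle I expect is the Family~I case with two unit-clauses $\set{x},\set{y}$. Here I must confirm that the preimage taken in $\mf U_{x(F')}$, whose last arc is the unit $(\ol y, y)$, is exactly the path found in the iteration for $x(F')$. This follows from \cref{thm:oneunit}, Part \ref{thm:oneunit3}, since that preimage is unique. I also need the removal of $\set{x(F')}$ to block the alternative representation starting from $y$ in the later iteration, which is the removal argument above.
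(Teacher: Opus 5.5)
Your proposal is correct and follows the paper's own argument: apply \cref{thm:correctnessenumwithunit} in each iteration, and avoid duplicates across unit-clauses by processing the unit-clauses in $L$-order and removing each one afterwards, so that every Family~I MUS is produced only under its $L$-smaller unit literal, as in \cref{def:pathlex}. You also spell out two points the paper leaves implicit: the preimages, and hence $P(F')$, are the same for $F_i$ as for $F$; and outputs of different iterations come out in the right order because their representative paths start with different unit literals.
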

\begin{proof}
For different unit-clauses $\set{x}, \set{y} \in F$ we have $\mus_{\set{x},\set{y}}(F) = \mus_{\set{y},\set{x}}(F) \sse \mus_{\set{x}}(F) \cap \mus_{\set{y}}(F)$, and so we need to prevent to output MUSs using $\set{x}$ and $\set{y}$ twice.
By \cref{def:pathlex}, Item \ref{def:pathlex2}, we have to choose the smaller of $x, y$ w.r.t.\ $L$, and that happens in Line \algref{alg:enumall}{line:enumxL}, together with the elimination of it in Line \algref{alg:enumall}{line:elimx}.
Now the assertion follows with \cref{thm:correctnessenumwithunit}.
\end{proof}

\begin{theorem}\label{thm:runtimeenumwithout}
  \cref{alg:enumall}, enumerating $P \in \bc_{\set{x} \in F} \mf U_x(\idg(F))$ in $L$-pathlex order as delivered to \textsc{output}$(P)$ in Line \algref{alg:DFS}{line:calloutput}, uses time $O(n(F) \ell(F))$ for finding the first element and for finding each subsequent one, while finally determining that there are no further elements takes time $O(u(F) \ell(F))$.
  Space usage is $O(\ell(F))$.

  For the resulting enumeration of $F' \in \bc_{\set{x} \in F} \mus_{\set{x}}(F)$ in $L$-pathlex order, given that we have found $m \in \NNZ$ many MUSs $F'$ so far, finding the next $F'$ (or determining there is none) can be done in time $O((m+1) \cdot n(F) \ell(F))$.
\end{theorem}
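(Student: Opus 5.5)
The plan is to reduce this to \cref{thm:runtimeenumwith}, applied to each of the at most $u(F)$ calls $\enum(F_i, \set{x_i}, L)$ made by \cref{alg:enumall}. Here $F_i$ is $F$ with the unit-clauses $\set{x_1},\dots,\set{x_{i-1}}$ removed. Each $F_i \sse F$, so $n(F_i) \le n(F)$ and $\ell(F_i) \le \ell(F)$, and all bounds from \cref{thm:runtimeenumwith} for $F_i$ also hold in terms of $F$. Removing a unit-clause in Line \algref{alg:enumall}{line:elimx} costs constant time, given a suitable representation; the same goes for removing the corresponding unit arc from $\idg(F)$.

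For the path enumeration, I fix a point during the run and bound the time until the next \textsc{output} call. Suppose we are inside the call for $x_i$. By \cref{thm:runtimeenumwith}, within $O(n(F)\ell(F))$ time we either reach the next path of $\mf U_{x_i}(\idg(F_i))$, or we conclude within $O(\ell(F))$ that no further path exists. In the latter case we move on to $x_{i+1}$. Each subsequent call with no paths ends after the initial reachability test in Line \algref{alg:enum}{line:initialcheck}, costing $O(\ell(F))$. A call with paths yields its first path after $O(n(F)\ell(F))$.

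The main obstacle is controlling the accumulated cost of empty calls. Between two consecutive outputs there may be up to $u(F)$ empty calls, costing $O(u(F)\ell(F))$ in total. I would handle this via $u(F) \le 2n(F)$: every unit-clause is $\set{z}$ for a literal $z$, and distinct unit-clauses are distinct literals. So $O(u(F)\ell(F)) \sse O(n(F)\ell(F))$, and the delay bound for the next path, including the first, is $O(n(F)\ell(F))$. After the last path, the remaining tail is at most $u(F)$ calls, each either backtracking in linear time or failing its initial test. This gives the final $O(u(F)\ell(F))$. Space stays $O(\ell(F))$, because the calls run sequentially and reuse the DFS stack and the graph.

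For the MUS enumeration, by \cref{thm:correctnessenumwithoutunit} each $F'$ is printed exactly once. The paths mapping to $F'$ are taken from $\mf U_{x(F')}$ inside a single call, and there are at most two such paths by \cref{thm:oneunit}. Silent paths are the non-printed members of these preimages. With the $x_j$-call, $j < i$, the Family-I MUSs containing $\set{x_j}$ are handled there, and since $\set{x_j}$ is eliminated afterwards they produce no paths in later calls. So every silent path encountered is a sibling of an already-printed MUS. If $m$ MUSs have been output, then at most $m$ silent paths lie before the next printed one, i.e.\ at most $m+1$ path-search phases, each costing $O(n(F)\ell(F))$ by the previous paragraph. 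The termination case is covered too, since $O(u(F)\ell(F)) \sse O(n(F)\ell(F))$. This yields $O((m+1)\cdot n(F)\ell(F))$.
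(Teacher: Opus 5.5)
Your proposal is correct and follows the paper's own route: apply \cref{thm:runtimeenumwith} to each unit-clause call, and use that a call yielding no paths stops at the linear-time reachability test of \cref{alg:enum}. You also make explicit what the paper's one-line proof leaves implicit: up to $u(F)$ consecutive empty calls between two outputs cost $O(u(F)\ell(F))$, and this is absorbed into the $O(n(F)\ell(F))$ delay via $u(F) \le 2n(F)$.
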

\begin{proof}
The assertions follow with \cref{thm:runtimeenumwith}, taking into account that for a given unit-clause $\set{x}$ unsatisfiability takes only linear time.
\end{proof}

\section{Summary and Outlook}
\label{sec:outlook}

We started by showing 2-MU decision in linear time (\cref{sec:dec2mu}), based on singular DP-reduction (sDP), with Question \ref{que:complsDP} asking what the general complexity of sDP is.
We then started considering MUSs of 2-CNFs.
Question \ref{que:2MUlintime} asked whether finding some MUS can be done in linear time.
Then in Section \ref{sec:def1MUSNPcompl} we turned to hardness results, showing that deciding the existence of Family-III or Family-IV MUSs is NP-complete.
Family-I and Family-II MUSs (those with two resp.\ one unit-clause) on the other hand can be found efficiently, as shown in Section \ref{Sec:def1MUSsimple}.
Question \ref{que:oneunit} considers the problem of finding shortest Family-II-MUSs (which can be done efficiently for Family-I-MUSs).
Finally we showed that Families I and II together can be enumerated in incremental polynomial time; Question \ref{que:polydelay} asks whether polynomial delay is possible, while Question \ref{que:usingotherpathenums} discusses quickly more practical algorithms.

An interesting open research direction is to establish a more complete \textbf{complexity map} for finding MUSs of 2-CNFs.
The same question about a more complete map arises concerning further efficient enumerations of MUSs.
\begin{conjecture}\label{con:regularpaths}
  Regular paths in digraphs with given skew-symmetry can be enumerated with polynomial delay.
\end{conjecture}
If this conjecture were true, then we could enumerate with polynomial delay all MUSs containing two given unit-clauses.

\newcommand{\noopsort}[1]{}

\begin{appendix}
  
\section{Examples for translations of cycle problems}
\label{sec:appendixexamplescycles}

A simple example of an st-digraph $G$ with a special cycle (recall Definition \ref{def:CDPP}), and its translation $\tFC(G)$ (Definition \ref{def:tFC}) plus the variation $\tFC'(G)$ from the proof of Theorem \ref{thm:gendef1NPCvariant}; the arcs corresponding to the given clauses are drawn with full lines, while their contrapositions are drawn with dashed lines:
\begin{gather*}
  G:
  \xymatrix {
    & a \ar[dr] & \\
    s \ar[ur] & & t \ar[dl] \\
    & b \ar[ul]  &
  }
  \quad
  \idg(\tFC(G)) :
  \hspace*{-2em}\xymatrix {
    & \ol a\ar@{-->}[ddr] & & \\
    & a \ar[dr] & & \\
    x_0 \ar[ur] \ar@{-->}[uur] & & \ol x_0 \ar[dl] \ar@{-->}[ddl] \\
    & b \ar[ul] & & \\
    & \ol b \ar@{-->}[uul] & &
  }
  \hspace*{-2em}\idg(\tFC'(G)):
  \hspace*{-2em}\xymatrix {
    & \ol a \ar@{-->}[dr] & \\
    x_0 \ar@{-->}[ur] \ar[r] & a \ar[r] & \ol{x_0} \ar[d] \\
    \ol{y_0} \ar@{-->}[u] & b \ar[l] & y_0 \ar[l] \ar@{-->}[dl] \\
    & \ol b \ar@{-->}[ul] &
  }\\
  \tFC(G): x_0 \ra a,\; a \ra \ol{x_0},\; \ol{x_0} \ra b,\; b \ra x_0;\; n = 3, c = 4\\
  \tFC'(G): x_0 \ra a,\; a \ra \ol x_0,\; \boxed{\ol{x_0} \ra y_0},\; y_0 \ra b,\; b \ra \ol{y_0};\; n = 4, c = 5.
\end{gather*}

\section{MUS enumeration algorithm worked example}
\label{sec:appendixenum}

We demonstrate \cref{alg:enum} on the union of three simple MUSs:
\begin{enumerate}
\item $\Hp_{2} = \set{\set{x_1}, \set{\ol{x_1}, x_2}, \set{\ol{x_2}}}$
\item $\Hl_{2,1} = \set{\set{x_1}, \set{\ol{x_1}, x_2},  \set{\ol{x_1}, \ol{x_2}}}$
\item $\Hl_{3,2} = \set{\set{x_1}, \set{\ol{x_1}, x_2}, \set{\ol{x_2}, x_3}, \set{\ol{x_2}, \ol{x_3}}}$.
\end{enumerate}

Their implication digraphs are
\begin{displaymath}
  \scalebox{0.75}[1]{$
    \xymatrix{
      & & x_2 \ar[dd] & & \\
      & & & & \\
      & & \ol{x_2} \ar@{-->}[dll] & & \\
      \ol{x_1} \ar[rrrr] & & & & x_1 \ar@/_2pc/[uuull]
    } \quad
    \xymatrix{
      & & x_2 \ar@/_2pc/[dddll] & & \\
      & & & & \\
      & & \ol{x_2} \ar@{-->}[dll] & & \\
      \ol{x_1} \ar[rrrr] & & & & x_1 \ar@{-->}[ull] \ar@/_2pc/[uuull]
    } \quad
    \xymatrix{
      & & x_2 \ar@{-->}[dr] \ar[dl] & & \\
      & x_3 \ar[dr] & & \ol{x_3} \ar@{-->}[dl] & \\
      & & \ol{x_2} \ar@{-->}[dll] & & \\
      \ol{x_1} \ar[rrrr] & & & & x_1 \ar@/_2pc/[uuull]
    }
    $}
\end{displaymath}

They belong to families I, IIa, IIb respectively.
The implication digraph of their union is the union of these digraphs:

\begin{displaymath}
  \xymatrix{
    & & x_2 \ar@{-->}[dr] \ar[dl] \ar[dd] \ar@/_2pc/[dddll] & & \\
    & x_3 \ar[dr] & & \ol{x_3} \ar@{-->}[dl] & \\
    & & \ol{x_2} \ar@{-->}[dll] & & \\
    \ol{x_1} \ar[rrrr] & & & & x_1 \ar@{-->}[ull] \ar@/_2pc/[uuull]
  }
\end{displaymath}

Running \cref{alg:enum} for $x=x_1$ with $L : (x_1 < \ol{x_1} < x_2 < \ol{x_2} < x_3 < \ol{x_3})$ yields the trace shown in \cref{tab:mus-enum-trace}.
The table lists only the non-trivial events of \cref{alg:DFS,alg:output}.
Steps 5, 8, and 12 produce the three distinct MUSs, while Steps 16 and 20 are lexicographically larger sibling paths for which the guard at line \algref{alg:output}{line:ifplspp} suppresses output.
Columns $y$ and $R$ record, respectively, the current DFS literal (or, at clashes and outputs, the last literal on $P$) and the value computed at line \algref{alg:DFS}{line:computeR}.

\begin{table}
  \centering
  \scriptsize
  \setlength{\tabcolsep}{3pt}
  \caption{Trace of $\enum(F,\set{x_1},L)$ for $F = U^{2}_{2} \cup U^{1}_{2,1} \cup U^{1}_{3,2}$. $P$ records the literal sequence.}
  \label{tab:mus-enum-trace}

  \begin{tabular}{l l}
    \hline
    Event & Algorithmic reference \\
    \hline
    init call & \cref{alg:enum}, line~\ref{line:initialDFS} invoking \cref{alg:DFS}\\
    \textsc{DFS}$(P,y)$ & \cref{alg:DFS}, lines~\ref{line:computeR}--\ref{line:recursiveDFS}: extend $P$, compute $R$, recurse on $y$ \\
    clash & \cref{alg:DFS}, line \ref{line:calloutput}: $V(P)$ contains complementary literals \\
    output & \cref{alg:output}, Family I/IIa/IIb branch (including the guard at line \algref{alg:output}{line:ifplspp}) \\
    output (silent) & \cref{alg:output}, guard line~\ref{line:ifplspp} fails for the lexicographic sibling \\
    \hline
  \end{tabular}
  \medskip

  \begin{tabular}{r l l l l l}
    \hline
    Step & Event & $y$ & $R$ & $P$ & Notes \\
    \hline
    1 & init call & $x_1$ & $(x_2, \ol{x_2})$ & $(x_1)$ & \\
    2 & DFS($P,y$) & $x_2$ & $(\ol{x_1}, \ol{x_2}, x_3, \ol{x_3})$ & $(x_1,x_2)$ & \\
    3 & DFS($P,y$) & $\ol{x_1}$ & -- & $(x_1,x_2,\ol{x_1})$ & \\
    4 & clash & $\ol{x_1}$ & -- & $(x_1,x_2,\ol{x_1})$ & \\
    5 & output & $\ol{x_1}$ & -- & $(x_1,x_2,\ol{x_1})$ & Family IIa\\
    6 & DFS($P,y$) & $\ol{x_2}$ & -- & $(x_1,x_2,\ol{x_2})$ & \\
    7 & clash & $\ol{x_2}$ & -- & $(x_1,x_2,\ol{x_2})$ & \\
    8 & output & $\ol{x_2}$ & -- & $(x_1,x_2,\ol{x_2})$ & Family I\\
    9 & DFS($P,y$) & $x_3$ & $(\ol{x_2})$ & $(x_1,x_2,x_3)$ & \\
    10 & DFS($P,y$) & $\ol{x_2}$ & -- & $(x_1,x_2,x_3,\ol{x_2})$ & \\
    11 & clash & $\ol{x_2}$ & -- & $(x_1,x_2,x_3,\ol{x_2})$ & \\
    12 & output & $\ol{x_2}$ & -- & $(x_1,x_2,x_3,\ol{x_2})$ & Family IIb\\
    13 & DFS($P,y$) & $\ol{x_3}$ & $(\ol{x_2})$ & $(x_1,x_2,\ol{x_3})$ & \\
    14 & DFS($P,y$) & $\ol{x_2}$ & -- & $(x_1,x_2,\ol{x_3},\ol{x_2})$ & \\
    15 & clash & $\ol{x_2}$ & -- & $(x_1,x_2,\ol{x_3},\ol{x_2})$ & \\
    16 & output (silent) & $\ol{x_2}$ & -- & $(x_1,x_2,\ol{x_3},\ol{x_2})$ & Sibling of Step~12; guard suppresses print \\
    17 & DFS($P,y$) & $\ol{x_2}$ & $(\ol{x_1})$ & $(x_1,\ol{x_2})$ & \\
    18 & DFS($P,y$) & $\ol{x_1}$ & -- & $(x_1,\ol{x_2},\ol{x_1})$ & \\
    19 & clash & $\ol{x_1}$ & -- & $(x_1,\ol{x_2},\ol{x_1})$ & \\
    20 & output (silent) & $\ol{x_1}$ & -- & $(x_1,\ol{x_2},\ol{x_1})$ & Sibling of Step~5; guard suppresses print \\
    \hline
  \end{tabular}

  \begin{displaymath}
    \xymatrix{
      & & x_2 \ar@{-->}[dr] \ar[dl] \ar[dd] \ar@/_2pc/[dddll] & & \\
      & x_3 \ar[dr] & & \ol{x_3} \ar@{-->}[dl] & \\
      & & \ol{x_2} \ar@{-->}[dll] & & \\
      \ol{x_1} \ar[rrrr] & & & & x_1 \ar@{-->}[ull] \ar@/_2pc/[uuull]
    }
  \end{displaymath}
\end{table}

\end{appendix}

\end{document}